\documentclass[]{svjour3}                     
\smartqed  
\usepackage{graphicx}
\usepackage{mathptmx}      
\journalname{Higher-Order Symb Comput}
%

\catcode`\@=11
\def \myverbatimindent {3em}
\let \saveverbatime \@xverbatim
\def \@xverbatim {\leftskip = \myverbatimindent\relax\saveverbatime}
\catcode`\@=12

\usepackage{mathpartir}

\usepackage{extpfeil}

\usepackage[english]{babel}

\usepackage{multirow}
\usepackage{booktabs} 

\usepackage[unicode=true, pdfborder={0 0 0}]{hyperref}

\makeatletter
\newcommand{\prw@zbreak}{\nobreak\hskip\z@skip}
\newcommand{\BreakableHyphen}{\leavevmode%
  \prw@zbreak-\discretionary{}{}{}\prw@zbreak}
\DeclareRobustCommand{\hyp}{%
  \ifmmode-\else\BreakableHyphen\fi}
\makeatother

\hyphenation{light-weight}

\newcommand{\ssi}{\ensuremath{\quad\text{iff}\quad}}

\def\vcentcolon{\mathrel{\mathop\ordinarycolon}}
\newcommand{\coloneqq}{\vcentcolon\mkern-1.2mu=}
\newcommand{\Coloneqq}{\vcentcolon\mkern-.9mu\vcentcolon\mkern-1.2mu=}


\newcommand{\ajout}[3]{\ensuremath{#1 + \{#2\mapsto#3\} }}
\newcommand{\subst}[3]{\ajout{#1}{#2}{#3}}

\newcommand{\lift}[1]{\ensuremath{\mathop{(#1)}\nolimits_{\ast}}}

\newcommand{\dom}{\mathop{\mathrm{dom}}\nolimits}
\newcommand{\gc}[2]{#2 \setminus #1}
\newcommand{\Image}{\mathop{\mathrm{Im}}\nolimits}
\newcommand{\Loc}{\mathop{\mathrm{Loc}}\nolimits}
\newcommand{\Env}{\mathop{\mathrm{Env}}\nolimits}
\newcommand{\close}[1]{#1_{*}}

\newcommand{\concat}{\cdot}

\newcommand{\range}[3]{\ensuremath{#1_{#2},\dotsc,#1_{#3}}}
\newcommand{\drange}[4]{\ensuremath{(#1_{#3},#2_{#3})\concat\dotsc\concat(#1_{#4},#2_{#4})}}

\newcommand{\unit}{\ensuremath{\mathbf{1}}}
\newcommand{\true}{\ensuremath{\mathop{\mathbf{true}}\nolimits}}
\newcommand{\false}{\ensuremath{\mathop{\mathbf{false}}\nolimits}}
\newcommand{\ite}[3]{\ensuremath{\mathop{\mathbf{if}}\ #1\
 \mathop{\mathbf{then}}\ #2\ \mathop{\mathbf{else}}\ #3}}
\newcommand{\letrec}[3]{\ensuremath{\mathop{\mathbf{letrec}}\ #1=#2\
  \mathop{\mathbf{in}}\ #3}}
\newcommand{\expr}{\ensuremath{\mathop{expr}\nolimits\/}}
\newcommand{\F}{\ensuremath{\mathcal{F}}}
\newcommand{\fun}[3]{\ensuremath{\left[\lambda#1.#2,#3\right]}}
\newcommand{\rhot}{\ensuremath{{\rho_{T}}}}
\newcommand{\env}[2]{\ensuremath{#1|#2}}

\newextarrow{\intArrow}{55{40}0}{\equiv\equiv\Rrightarrow}
\newextarrow{\optArrow}{55{40}0}{\Relbar\Relbar\Rightarrow}
\newextarrow{\naiveArrow}{55{40}0}{\relbar\relbar\rightarrow}

\newcommand{\reductionF}[6][\env{\rhot}{\rho}]{%
  \ensuremath{\left< #2,#3\right> \optArrow[#6]{#1} \left< #4, #5\right>}}
\newcommand{\reduction}[5][\env{\rhot}{\rho}]{\reductionF[#1]{#2}{#3}{#4}{#5}{\F}}

\newcommand{\reduclift}[5][\env{\rhot}{\rho}]{%
  \reductionF[#1]{\lift{#2}}{#3}{#4}{#5}{\lift{\F}}}

\newcommand{\reductionNF}[6][\rho]{%
  \ensuremath{\left< #2,#3\right> \naiveArrow[#6]{#1} \left< #4, #5\right>}}
\newcommand{\reductionN}[5][\rho]{\reductionNF[#1]{#2}{#3}{#4}{#5}{\F}}

\begin{document}

\title{Continuation-Passing C}
\subtitle{Compiling threads to events through continuations}

\author{Gabriel Kerneis       \and
        Juliusz Chroboczek
}

\institute{   G. Kerneis \at
              Laboratoire PPS, Universit\'e Paris Diderot,
              Case 7014,
              75205 Paris Cedex 13, France\\
              \email{kerneis@pps.univ-paris-diderot.fr}
           \and
              J. Chroboczek \at
              Laboratoire PPS, Universit\'e Paris Diderot,
              Case 7014,
              75205 Paris Cedex 13, France
}

\date{Received: date / Accepted: date}

\maketitle

\begin{abstract}

In this paper, we introduce Continuation Passing C (CPC), a programming
language for concurrent systems in which native and cooperative threads
are unified and presented to the programmer as a single abstraction.
The CPC compiler uses a compilation technique, based on the CPS
transform, that yields efficient code and an extremely lightweight
representation for contexts.
We provide a proof of the correctness of our compilation
scheme.  We show in particular that lambda-lifting, a common compilation
technique for functional languages, is also correct in an imperative
language like C, under some conditions enforced by the CPC compiler.
The current CPC compiler is mature enough to write substantial programs
such as Hekate, a highly concurrent BitTorrent seeder.  Our benchmark
results show that CPC is as efficient, while using significantly less
space, as the most efficient thread libraries available.

\keywords{Concurrent programming \and Lambda-lifting \and 
Continuation-Passing Style} \end{abstract}

\tableofcontents

\section{Introduction}

Most computer programs are {\em concurrent\/} programs, which need to
perform multiple tasks at a given time.  For example, a network server
needs to serve multiple clients at a time; a program with a graphical
user interface needs to expect keystrokes and mouse clicks at multiple
places; and a network program with a graphical interface (e.g.\ a web
browser) needs to do both.

The dominant abstraction for concurrency is provided by {\em threads},
or {\em lightweight processes}.  When programming with threads,
a process is structured as a dynamically changing number of
independently executing threads that share a single heap (and possibly
other data, such as global variables).  The alternative to threads is
{\em event-loop\/} or {\em event-driven\/} programming.  An event-driven
program interacts with its environment by reacting to a set of stimuli
called {\em events}.  At any given point in time, to every event is
associated a piece of code known as the {\em handler\/} for this event.
A global scheduler, known as the {\em event loop}, repeatedly waits for
an event to occur and invokes the associated handler.

Unlike threads, event handlers do not have an associated stack;
event-driven programs are therefore more lightweight and often faster
than their threaded counterparts.  However, because it splits the flow
of control into multiple tiny event-handlers, event-driven programming
is difficult and error-prone.  Additionally, event-driven programming
alone is often not powerful enough, in particular when accessing
blocking APIs or using multiple processor cores; it is then necessary to
write \emph{hybrid} code, that uses both threads and event handlers,
which is even more difficult.

Since event-driven programming is more difficult but more efficient than
threaded programming, it is natural to want to at least partially
automate it.

\paragraph{Continuation Passing C}

Continuation Passing C (CPC) is an extension of the C programming
language with concurrency primitives which is implemented by a series of
source-to-source transformations, including a phase of lambda-lifting
(Section~\ref{sec:lifting-intro}) and a transformation into Continuation-Passing Style
(Section~\ref{sec:cps-conversion}).  After this series of transformations, the CPC
translator yields a program in hybrid style, most of it event-driven, but
using the native threads of the underlying operating system wherever
this has been specified by the CPC programmer, for example in order to
call blocking APIs or distribute CPU-bound code over multiple cores or
processors.

Although it is implemented as a series of well-understood
source-to-source translations, CPC yields code that is just as fast as
the fastest thread libraries available to us; we further
believe that the code it generates is similar to carefully tuned code
written by a human programmer with a lot of experience with event-driven
code.  And although it is an extension of the C programming language,
a language notoriously hostile to formal proofs, the correctness of the
transformations used by CPC have been proved; while we
perform these proofs in a simplified dialect, there is nothing in
principle that prevents a complete proof of the correctness of CPC.  We
believe that this property makes CPC unique among the systems for
efficient implementation of concurrency in a widely used imperative
programming language.

In this paper, we give an introduction to CPC programming, we describe
informally the transformations performed by CPC, and give an outline of
the proof of correctness; the complete proof can be found in a companion
technical report \cite{kerneis12}.

\subsection{Outline}

The rest of this paper is organised as follows.  In
Section~\ref{sec:related-work}, we present previous work related to
threads and event-driven programming, and to the compilation techniques
used by the CPC translator.  In Section~\ref{sec:cpc-language}, we give
an overview of the CPC language through excerpts of code from Hekate,
the most significant program written in CPC so far.  In
Section~\ref{sec:experimental-results}, we show that CPC threads are as
fast as the most efficient thread libraries available to us.  In
Section~\ref{sec:compilation-technique}, we detail the several passes of
the CPC compilation technique to translate threads into events, and in
Section~\ref{sec:lifting} we prove the correctness of one of these
passes, lambda-lifting, in the context of an imperative, call-by-value
language restricted to functions called in tail position.  We conclude
in Section~\ref{sec:conclusion}.

\section{Related work}
\label{sec:related-work}

\subsection{Continuations and concurrency}

Continuations offer a natural framework to implement concurrency systems
in functional languages: they capture ``the rest of the computation''
much like the state of an imperative program is captured by the call
stack.  Thread-like primitives may be built with either first-class
continuations, or encapsulated within a continuation monad.

The former approach is best illustrated by \emph{Concurrent ML} constructs
\cite{reppy}, implemented on top of SML/NJ's first-class continuations,
or by the way coroutines and engines are typically implemented in Scheme using
the \texttt{call/cc} operator \cite{haynes,dybvig} (previously known as
\texttt{catch} \cite{wand}).  \emph{Stackless Python} \cite{tismer} uses
first-class continuations to implement generators, which are in turn used to
implement concurrency primitives.  Scala also uses first-class continuations,
through the \texttt{shift} and \texttt{reset} operators, to implement
concurrency primitives and asynchronous IO \cite{rompf}.

Explicit translation into continuation-passing style, often encapsulated
within a monad, is used in languages lacking first-class continuations.
In Haskell, the original idea of a concurrency monad is due to Scholz
\cite{scholz}, and extended by Claessen \cite{claessen} to a monad
transformer yielding a concurrent version of existing monads.  Li et.\ al
\cite{li} use a continuation monad to lazily translate the thread
abstraction exposed to the programmer into events scheduled by an event
loop.  In Objective Caml, Vouillon's \emph{lwt} \cite{vouillon} provides a
lightweight alternative to native threads, with the ability to execute
blocking tasks in a separate thread pool.

\subsection{Transformation techniques}

The three main techniques used by CPC --- CPS conversion, lambda-lifting and
splitting --- are fairly standard techniques for compiling functional
languages, or at least languages providing inner functions.

The conversion into continuation-passing style, or CPS conversion, has been
discovered and studied many times in different contexts
\cite{plotkin,strachey,reynolds}.  It is used for instance to compile Scheme
(Rabbit \cite{steele}) and ML (SML/NJ \cite{appel}), both of them exposing
continuations to the programmer.

Lambda-lifting, also called closure-conversion, is a standard technique to
remove free variables.  It is introduced by Johnsson \cite{johnsson} and made
more efficient by Danvy and Schultz \cite{danvy}.  Fischbach and Hannan prove
its correctness for the Scheme language\cite{fischbach}.  Although environments
are a more common way to handle free variables, some implementations use
lambda-lifting; for instance, the \emph{Twobit} Scheme-to-C compiler
\cite{clinger}.

We call \emph{splitting} the conversion of a complex flow of control into
mutually recursive function calls.  Van Wijngaarden is the first to describe
such a transformation, in a source-to-source translation for Algol 60
\cite{wijngaarden}.  The idea is then used by Landin to formalise a translation
between Algol and the lambda-calculus \cite{landin}, and by Steele and Sussman
to express gotos in applicative languages such as LISP or Scheme
\cite{steele}.  Thielecke adapts van Wijngaarden's transformation to the C
language, albeit in a restrictive way \cite{thielecke}.

We are aware of two implementations of splitting: Weave and TameJS.
\emph{Weave} is an unpublished tool used at IBM around year 2000 to write
firmware and drivers for SSA-SCSI RAID storage adapters \cite{weave}.  It
translates annotated Woven-C code, written in threaded style, into C code
hooked into the underlying event-driven kernel.
\emph{TameJS}\footnote{\url{http://tamejs.org/}}, by the authors of the C++
\emph{Tame} library \cite{krohn}, is a similar tool for Javascript where
event-driven programming is made mandatory by the lack of concurrency
primitives.\footnote{Note that, contrary to TameJS, the original Tame
    implementation in C++ does not use splitting but a state machine using
    switches.}

\subsection{Threads and events}
There has been a lot of research to provide efficient  threads, as well
as to make event-driven programming easier \cite{behren,welsh,pai}.
We focus here on results involving a transformation between threads and
events, or building bridges between them.

Adya et\ al. \cite{adya} present adaptors between event-driven and
threaded code to write programs mixing both styles. They first introduce
the notion of \emph{stack ripping} to describe the manual stack
management required by event-driven style.

Duff introduces a technique, known as \emph{Duff's device} \cite{duff}, to
express general loop unrolling directly in C, using the \verb+switch+ statement.
Since then, this technique has been employed multiple times to express state
machines and event-driven programs in a threaded style.\footnote{This abuse was
    already envisioned by Duff in 1983: ``I have another revolting way to use
    switches to implement interrupt driven state machines but it's too horrid to
    go into.'' \cite{duff}} For instance, it is used by Tatham to implement
coroutines in C \cite{tatham}.  Other C libraries later expanded this idea, such
as \emph{protothreads} \cite{protothreads} and \emph{FairThreads}' automata
\cite{boussinot}.  These libraries help keep a clearer flow of control but they
provide no automatic handling of local variables: the programmer is expected to
save them manually in his own data structures, just like in event-driven style.

\emph{Tame} \cite{krohn} is a C++ language extension and library which exposes events
to the programmer while introducing state machines to avoid the stack ripping
issue and retain a thread-like feeling.  The programmer needs to annotate
local variables that must be saved across context switches.

\emph{TaskJava} \cite{fischer} implements the same idea than Tame, in Java, but
preserves local variables automatically, storing them in a state record.
\emph{Kilim} \cite{srinivasan} is a message-passing framework for Java providing
actor-based, lightweight threads.  It is also implemented by a partial CPS
conversion performed on annotated functions, but contrary to TaskJava, it works
at the JVM bytecode level.

\emph{AC} is a set of language constructs for composable asynchronous IO in C
and C++ \cite{harris}.  Harris et\ al. introduce \texttt{do..finish} and
\texttt{async} operators to write asynchronous requests in a synchronous style,
and give an operational semantics.  The language constructs are somewhat similar
to those of Tame but the implementation is widely different, using LLVM code
blocks or macros based on GCC nested functions rather than source-source
transformations.

Haller and Odersky \cite{haller} advocate unification of thread-based and
event-based models through actors, with the \texttt{react} and \texttt{receive}
operators provided by the \emph{Scala Actors} library; suspended actors are
represented by continuations.

\section{The CPC language} \label{sec:cpc-language}

Together with two undergraduate students, we taught ourselves how to program in
CPC by writing \emph{Hekate}, a \emph{BitTorrent} \emph{seeder}, a massively
concurrent network server designed to efficiently handle tens of thousands of
simultaneously connected peers \cite{places2011,attar-canal}.  In this section,
we give an overview of the CPC language through several programming idioms
that we discovered while writing Hekate.

\subsection{Cooperative CPC threads}
\label{sec:cpc-threads}

The extremely lightweight, cooperative threads of CPC lead to a ``threads
are everywhere'' feeling that encourages a somewhat unusual programming
style.

\paragraph{Lightweight threads}
Contrary to the common model of using one thread per client, Hekate
spawns at least three threads for every connecting peer: a reader, a
writer, and a timeout thread.  Spawning several CPC threads per client
is not an issue, especially when only a few of them are active at any
time, because idle CPC threads carry virtually no overhead.

The first thread reads incoming requests and manages the state of the
client.  The BitTorrent protocol defines two states for interested peers:
\emph{``unchoked,''} i.e.\ currently served, and \emph{``choked.''}  Hekate
maintains 90\,\% of its peers in \emph{choked} state, and \emph{unchokes} them
in a round-robin fashion.

The second thread is in charge of actually sending the chunks of data
requested by the peer.  It usually sleeps on a condition variable,
and is woken up by the first thread when needed.  Because these
threads are scheduled cooperatively, the list of pending chunks is
manipulated by the two threads without need for a lock.

Every read on a network interface is guarded by a timeout, and a peer that
has not been involved in any activity for a period of time is disconnected.
Earlier versions of Hekate which did not include this protection would end
up clogged by idle peers, which prevented new peers from connecting.

In order to simplify the protocol-related code, timeouts are implemented in
the buffered read function, which spawns a new timeout thread on each
invocation.  This temporary third thread sleeps for the duration of the timeout,
and aborts the I/O if it is still pending.  Because most timeouts do not
expire, this solution relies on the efficiency of spawning and
context-switching short-lived CPC threads (see
Section~\ref{sec:experimental-results}).

\paragraph{Native and cps functions}
CPC threads might execute two kinds of code: \emph{native} functions
and \emph{cps} functions (annotated with the \texttt{cps} keyword).
Intuitively, cps functions are interruptible and native functions are not:
it is possible to interrupt the flow of a block of cps code in order to
pass control to another piece of code, to wait for an event to happen or
to switch to another scheduler (Section~\ref{sec:detached-threads}).
Note that the \texttt{cps} keyword does not mean that the function is
written in continuation-passing style, but rather that it is to be
CPS-converted by the CPC translator.
Native code, on the other hand, is ``atomic'': if a sequence of native
code is executed in cooperative mode, it must be completed before anything
else is allowed to run in the same scheduler.
From a more technical point of view, cps functions are compiled by
performing a transformation to Continuation Passing Style (CPS), while
native functions execute on the native stack.

There is a global constraint on the call graph of a CPC program: a cps
function may only be called by a cps function; equivalently, a native
function can only call native functions --- but a cps function may call
a native function.  This means that at any point in time, the dynamic
chain consists of a ``cps stack'' of cooperating functions followed by a
``native stack'' of regular C functions.  Since context switches are
forbidden in native functions, only the cps stack needs to be saved and
restored when a thread cooperates.

\paragraph{CPC primitives}
CPC provides a set of primitive cps functions, which allow the
programmer to schedule threads and wait for some events. These primitive
functions could not have been defined in user code: they must have
access to the internals of the scheduler to operate. Since they are cps
functions, they can only be called by another cps function.

Figure~\ref{fig:listening} shows an example of a cps function:
\texttt{listening} calls the primitive \texttt{cpc\_io\_wait} to wait
for the file descriptor \texttt{socket\_fd} to be ready, before
accepting incoming connections with the native function \texttt{accept}
and spawning a new thread for each of them.  Note the use of the
\texttt{cpc\_spawn} keyword to create a new thread executing the cps
function \texttt{client}.

\begin{figure}
\begin{center}
\begin{minipage}{0.7\linewidth}
\small
\begin{verbatim}
cps void
listening(hashtable * table) {
    /* ... */
    while(1) {
        cpc_io_wait(socket_fd, CPC_IO_IN);
        client_fd = accept(socket_fd, ...);
        cpc_spawn client(table, client_fd);
    }
}
\end{verbatim}
\end{minipage}
\end{center}
\caption{Accepting connections and spawning threads}
\label{fig:listening}
\end{figure}

The CPC language provides five cps primitives to suspend and synchronise
threads on some events.  The simplest one is \texttt{cpc\_yield}, which
yields control to the next thread to be executed.  The primitives
\texttt{cpc\_io\_wait} and \texttt{cpc\_sleep} suspend the current
thread until a given file descriptor has data available or some time has
elapsed, respectively.  A thread can wait on some condition variable
\cite{hoare} with \texttt{cpc\_wait} ; threads suspended on a condition
variable are woken with the (non-cps) functions \texttt{cpc\_signal}
and \texttt{cpc\_signal\_all}.  To allow early interruptions,
\texttt{cpc\_io\_wait} and \texttt{cpc\_sleep} also accept an optional
condition variable which, if signaled, will wake up the waiting thread.

The fifth cps primitive, \texttt{cpc\_link}, is used to control how
threads are scheduled.  We give more details about it in
Section~\ref{sec:detached-threads}.

We found that these five primitives are enough to build more complex
synchronisation constructs and cps functions, such as barriers or
retriggerable timeouts.  Some of these generally useful functions,
written in CPC and built above the CPC primitives, are distributed with
CPC and form the CPC standard library.

\subsection{Comparison with event-driven programming}
\label{sec:comparison}

\paragraph{Code readability}

Hekate's code is much more readable than its event-driven equivalents.
Consider for instance the BitTorrent handshake, a message exchange
occurring just after a connection is established.  In
\emph{Transmission}\footnote{\url{http://www.transmissionbt.com/}},
a popular and efficient BitTorrent client written in (mostly)
event-driven style, the handshake is a complex piece of code, spanning
over a thousand lines in a dedicated file.  By contrast, Hekate's
handshake is a single function of less than fifty lines including error
handling.

While some of Transmission's complexity is explained by its support for
encrypted connexions, Transmission's code is intrinsically much more
messy due to the use of callbacks and a state machine to keep track of
the progress of the handshake.  This results in an obfuscated flow of
control, scattered through a dozen of functions (excluding
encryption-related functions), typical of event-driven code.

\paragraph{Expressivity}

Surprisingly enough, CPC threads turn out to naturally express some idioms that
are more commonly associated with event-driven style.

A case in point: buffer allocation for reading data from the network.
When a native thread performs a blocking read, it needs to allocate the
buffer before the \texttt{read} system call; when many threads are
blocked waiting for a read, these buffers add up to a significant amount
of storage.  In an event-driven program, it is possible to delay
allocating the buffer until after an event indicating that data is
available has been received.

The same technique is not only possible, but actually natural in CPC:
buffers in Hekate are only allocated after \texttt{cpc\_io\_wait} has
successfully returned.  This provides the reduced storage requirements
of an event-driven program while retaining the linear flow of control of
threads.

\subsection{Detached threads}
\label{sec:detached-threads}

While cooperative, deterministically scheduled threads are less error-prone
and easier to reason about than preemptive threads, there are circumstances
in which native operating system threads are necessary.  In traditional
systems, this implies either converting the whole program to use native
threads, or manually managing both kinds of threads.

A CPC thread can switch from cooperative to preemptive mode at
any time by using the \texttt{cpc\_link} primitive (inspired by
FairThreads' \texttt{ft\_thread\_link} \cite{boussinot}).  A cooperative
thread is said to be \emph{attached} to the default scheduler, while
a preemptive one is \emph{detached}.

The \texttt{cpc\_link} primitive takes a single argument, a scheduler,
either the default event loop (for cooperative scheduling) or a thread pool
(for preemptive scheduling).  It returns the previous scheduler, which
makes it possible to eventually restore the thread to its original state.
Syntactic sugar is provided to execute a block of code in attached or
detached mode (\texttt{cpc\_attached}, \texttt{cpc\_detached}).

Hekate is written in mostly non-blocking cooperative style; hence, Hekate's
threads remain attached most of the time.  There are a few situations,
however, where the ability to detach a thread is needed.

\paragraph{Blocking OS interfaces}

Some operating system interfaces, like the \texttt{getaddrinfo} DNS
resolver interface, may block for a long time (up to several seconds).
Although there exist several libraries which implement equivalent
functionality in a non-blocking manner, in CPC we simply enclose the call
to the blocking interface in a \texttt{cpc\_detached} block (see
Figure~\ref{fig:cpc-detached}a).

Figure~\ref{fig:cpc-detached}b shows how \texttt{cpc\_detached} is
expanded by the translator into two calls to \texttt{cpc\_link}.  Note that
CPC takes care to attach the thread before returning to the caller
function, even though the \texttt{return} statement is inside the
\texttt{cpc\_detached} block.

\begin{figure}[htb]
\small\centering
\begin{tabular}{ l | l }
&
\verb+cpc_scheduler *s =+\\
\verb+cpc_detached {+&
\verb+    cpc_link(cpc_default_threadpool);+\\
\verb+    rc = getaddrinfo(name, ...)+&
\verb+rc = getaddrinfo(name, ...)+\\
\verb+    return rc;+&
\verb+cpc_link(s);+\\
\verb+}+&
\verb+return rc;+\\
\multicolumn{1}{c}{\normalsize(a)}&
\multicolumn{1}{c}{\normalsize(b)}
\end{tabular}
\normalsize
\caption{Expansion of \texttt{cpc\_detached} in terms of \texttt{cpc\_link}}
\label{fig:cpc-detached}
\end{figure}

\paragraph{Blocking library interfaces}
Hekate uses the \textit{curl} library
\footnote{\url{http://curl.haxx.se/libcurl/}} to contact BitTorrent
\emph{trackers} over HTTP.  Curl offers both a simple, blocking
interface and a complex, asynchronous one.  We decided to use the one
interface that we actually understand, and therefore call the blocking
interface from a detached thread.

\paragraph{Parallelism}
Detached threads make it possible to run on multiple processors or
processor cores.  Hekate does not use this feature, but a CPU-bound program
would detach computationally intensive tasks and let the kernel schedule
them on several processing units.

\subsection{Hybrid programming}
\label{sec:hybrid-threads}

Most realistic event-driven programs are actually \emph{hybrid} programs
\cite{pai,welsh}: they consist of a large event loop, and a number of
threads (this is the case, by the way, of the \emph{Transmission}
BitTorrent client mentioned above).  Such blending of native threads
with event-driven code is made very easy by CPC, where switching from
one style to the other is a simple matter of using the \verb|cpc_link|
primitive.

This ability is used in Hekate for dealing with disk reads.  Reading from
disk might block if the data is not in cache; however, if the data is
already in cache, it would be wasteful to pay the cost of a detached
thread.  This is a significant concern for a BitTorrent seeder because
the protocol allows requesting chunks in random order, making kernel
readahead heuristics inefficient.

\begin{figure}
\begin{center}
\begin{minipage}{0.9\linewidth}
\small
\begin{verbatim}
    prefetch(source, length);                           /* (1) */
    cpc_yield();                                        /* (2) */
    if(!incore(source, length)) {                       /* (3) */
        cpc_yield();                                    /* (4) */
        if(!incore(source, length)) {                   /* (5) */
            cpc_detached {                              /* (6) */
                rc = cpc_write(fd, source, length);
            }
            goto done;
        }
    }
    rc = cpc_write(fd, source, length);                 /* (7) */
done:
    ...
\end{verbatim}
\end{minipage}
\end{center}
{\small\em The functions \verb|prefetch| and \verb|incore| are thin
wrappers around the {\tt posix\_madvise} and {\tt mincore} system
calls.}
\caption{An example of hybrid programming (non-blocking read)}
\label{fig:non-blocking-read}
\end{figure}

The actual code is shown in Figure~\ref{fig:non-blocking-read}: it sends
a chunk of data from a memory-mapped disk file over a network socket.
In this code, we first trigger an asynchronous read of the on-disk data
(1), and immediately yield to threads servicing other clients (2) in
order to give the kernel a chance to perform the read.  When we are
scheduled again, we check whether the read has completed (3); if it has,
we perform a non-blocking write (7); if it hasn't, we yield one more
time (4) and, if that fails again (5), delegate the work to a native
thread which can block (6).

Note that this code contains a race condition: the prefetched block of
data could have been swapped out before the call to {\tt cpc\_write},
which would stall Hekate until the write completes.  Moreover, this race
condition is even more likely to appear as load increases and on devices with
constrained resources.  To avoid this race condition and ensure non-blocking
disk reads, one could use asynchronous I/O.  However, while the Linux kernel
does provide a small set of asynchronous I/O system calls, we found them
scarcely documented and difficult to use: they work only on some file
systems, impose alignment restrictions on the length and address of buffers, and
disable the caching performed by the kernel.  We have therefore not experimented
with them.

Note further that the call to {\tt cpc\_write} in the {\tt
    cpc\_detached} block (6) could be replaced by a call to {\tt write}:
we are in a native thread here, so the non-blocking wrapper is not
needed.  However, the CPC primitives such as \texttt{cpc\_io\_wait} are
designed to act sensibly in both attached and detached mode; this
translates to more complex functions built upon them, and {\tt
    cpc\_write} simply behaves as {\tt write} when invoked in detached
mode.  For simplicity, we choose to use the CPC wrappers throughout our
code.

\section{Experimental results} \label{sec:experimental-results}

The CPC language provides no more than half a dozen very low-level
primitives.  The standard library and CPC programs are implemented in
terms of this small set of operations, and are therefore directly
dependent on their performance.  In Section~\ref{sec:speed-primitives},
we show the results of benchmarking individual CPC primitives against
other thread libraries.  In these benchmarks, CPC turns out to be
comparable to the fastest thread libraries available to us.

In the absence of reliable information on the relative dynamic frequency
of cps function calls and thread synchronisation primitives in CPC
programs, it is not clear what the performance of individual primitives
implies about the performance of a complete program.  In
Section~\ref{sec:macrobenchmarks}, we present the results of
benchmarking a set of naive web servers written in CPC and in a number
of thread libraries.

Finally, in Section~\ref{sec:bench-hekate}, we present a number of
insights that we gained by working with \emph{Hekate}, our BitTorrent
seeder written in CPC.

We compare CPC with the following thread libraries: \emph{nptl}, the
native thread library in GNU libc 2.13 (or $\mu$Clibc 0.9.32 for our
tests on embedded hardware) \cite{nptl}; \emph{GNU Pth} version 2.0.7
\cite{engelschall}; \emph{State Threads (ST)} version 1.9
\cite{shekhtman}.  \emph{Nptl} is a kernel thread library, while
\emph{GNU Pth} and \emph{ST} are cooperative user-space thread
libraries.

\subsection{Speed of CPC primitives} \label{sec:speed-primitives}

\subsubsection{Space utilisation}

On a 64-bit machine with 4\,GB of physical memory and no swap space, CPC can
handle up to 50.1 million simultaneous threads, which implies an average
memory usage of roughly 82 bytes per continuation.  This figure compares
very favourably to both kernel and user-space thread libraries (see
Figure~\ref{fig:space-benchmarks}), which our tests have shown to be
limited on the same system to anywhere from 32\,000 to 934\,600 threads
in their default configuration, and to 961\,400 threads at most after
some tuning.

\subsubsection{Speed}

Figure~\ref{fig:micro-benchmarks} presents timings on three different
processors: the \emph{Core 2 Duo}, a superscalar out-of-order 64-bit
processor, the \emph{Pentium-M}, a slightly less advanced out-of-order
32-bit processor, and the \emph{MIPS 4Kc}, a simple in-order 32-bit RISC
chip with a short pipeline.

\renewcommand{\thefootnote}{\em\alph{footnote}}

\begin{figure}[htb]
\begin{center}
\begin{minipage}{25em}
\renewcommand{\footnoterule}{}
\begin{center}
    \begin{tabular}{lr@{ }l}
\toprule
Library & Number of threads\\
\midrule
nptl & 32\,330\\
Pth & 700\,000 &(est.)\,\footnotemark[1]\\
ST & 934\,600\\
ST (4\,kB stacks) & 961\,400\\
CPC & 50\,190\,000\\
\bottomrule
\end{tabular}
\end{center}
\end{minipage}
\end{center}
\caption{Number of threads possible with 4 GB of memory\label{fig:space-benchmarks}}
\end{figure}

\begin{figure}[htb]
\begin{center}
\begin{minipage}{40em}
\begin{center}
    \begin{tabular}{lrrrrrr@{}l}
        \toprule
        Library & \centering loop & \centering call\footnotemark[2] & \centering cps-call & \centering switch &
        \centering cond & \centering spawn
        \tabularnewline
        \midrule
        nptl (1 core)& 2 & 2 & &439 & 1\,318 & 4\,000\,000&\,\footnotemark[3]\\
        nptl (2 cores)& 2  & 2 & & 135 & 1\,791 & 11\,037\\
        Pth & 2 & 2 & & 2\,130 & 2\,602 & 6\,940\\
        ST & 2 & 2 & & 170 & 25 & 411\\
        CPC & 2 & 2 & 20 & 16 & 36 & 74\\
        \bottomrule
        \addlinespace
        \multicolumn{6}{c}{\em x86-64: Intel Core 2 Duo at 3.17 GHz, Linux 2.6.39}
        \tabularnewline
        \addlinespace[1.5em]
        \toprule
        Library & \centering loop & \centering call & \centering cps-call & \centering switch &
        \centering cond & \centering spawn
        \tabularnewline
        \midrule
        nptl & 2 & 4 & & 691 & 2\,426 & 24\,575\\
        ST & 2 & 4 &  & 1\,003 & 96 & 2\,293\\
        CPC & 2 & 4 & 54 & 46 & 91 & 205\\
        \bottomrule
        \addlinespace
        \multicolumn{6}{c}{\em x86-32: Intel Pentium M at 1.7 GHz, Linux 2.6.38}
        \tabularnewline
        \addlinespace[1.5em]
        \toprule
        Library & \centering loop & \centering call & \centering cps-call & \centering switch &
        \centering cond & \centering spawn
        \tabularnewline
        \midrule
        nptl & 58 & 63 & & 6\,310 & 63\,305 & 933\,689\\
        CPC & 58 & 63 & 2\,018 & 1\,482 & 3\,268 & 8\,519\\
        \bottomrule
        \addlinespace
        \multicolumn{6}{c}{\em MIPS-32: MIPS 4KEc at 184 MHz, Linux 2.6.37}
    \end{tabular}
\end{center}

All times are in nanoseconds per loop iteration, averaged over millions
of runs (smaller is better).\\
The columns are as follows:\\
\emph{loop:} time per loop iteration (reference);\\
\emph{call:} time per loop iteration calling a direct-style function;\\
\emph{cps-call:} time per loop iteration calling a CPS-converted function;\\
\emph{switch:} context switch;\\
\emph{cond:} context switch on a condition variable;\\
\emph{spawn:} thread creation.
\end{minipage}
\end{center}
\caption{Speed of thread primitives on various architectures\label{fig:micro-benchmarks}}
\end{figure}
\footnotetext[1]{\emph{Pth} never completed, the given value is an educated guess.}
\footnotetext[2]{On x86-64, performing a direct-style function call
  adds no measurable delay.}
\footnotetext[3]{The behaviour of the system call \texttt{sched\_yield}
  has changed in Linux 2.6.23, where it has the side-effect of reducing
  the priority of the calling thread.  Until Linux 2.6.38, setting the
  kernel variable \texttt{sched\_compat\_yield} restored the previous
  behaviour, which we have done in our x86-32 and MIPS-32 benchmarks.
  This knob having been removed in Linux 2.6.39, the \emph{nptl} results
  on x86-64 are highly suspect.}
\renewcommand{\thefootnote}{\arabic{footnote}}

\paragraph{Function calls}

Modern processors include specialised hardware for optimising function
calls; obviously, this hardware is not able to optimise CPS-converted
function calls, which are therefore dramatically slower than
direct-style (non-cps) calls.  In our very simple benchmark (two embedded loops
containing a function call), the \emph{Core 2 Duo} was able to
completely hide the cost of a trivial function call\footnote{Disassembly
  of the binary shows that the function call is present.}, while the
function call costs three processor cycles on the \emph{Pentium-M} and
just one on the MIPS processor; we do not claim to understand the
hardware-level magic that makes such results possible.

With CPS-converted calls, on the other hand, our benchmarking loop takes
on the order of 100 processor cycles per iteration (400 on MIPS).  This
apparently poor result is mitigated by the fact that after goto
elimination and lambda-lifting, the loop consists of four CPS function
calls; hence, on the \emph{Pentium-M}, a CPS function call is just
6 times slower than a direct-style call, a surprisingly good result
considering that our continuations are allocated on the heap
\cite{miller}.  On the MIPS chip the slowdown is closer to a factor of
100, more in line with what we expected.

\paragraph{Context switches and thread creation}

The ``real'' primitives are pleasantly fast in CPC.  Context switches
have similar cost to function calls, which is not surprising since they
compile to essentially the same code.  While the benchmarks make them
appear to be much faster than in any other implementation, including \emph{ST},
this result is skewed by a flaw in the \emph{ST} library, which goes through
the event loop on every \verb|st_yield| operation.  Context switches on
a condition variable are roughly two times slower than pure context
switches, which yields timings comparable to those of \emph{ST}.

Thread creation is similarly fast, since it consists in simply
allocating a new continuation and registering it with the event loop.
This yields results that are ten times faster than \emph{ST}, which must
allocate a new stack, and a hundred times faster than the kernel
implementation.

\subsection{Macro-benchmarks} \label{sec:macrobenchmarks}

As we have seen above, the speed of CPC primitives ranges from similar
to dramatically faster than that of their equivalents in traditional
implementations.  However, a CPC program incurs the overhead of the CPS
translation, which introduces CPS function calls which are much slower
than their direct-style equivalents.  As it is difficult to predict the
number of CPS-converted calls in a CPC program, it is not clear whether
this performance increase will be reflected in realistic programs.

We have written a set of naive web servers (less than 200 lines each)
that share the exact same structure: a single thread or process waits
for incoming connections, and spawns a new thread or process as soon as
a client is accepted (there is no support for HTTP/1.1 persistent
connections).  The servers were benchmarked by repeatedly downloading
a tiny file over a dedicated Ethernet with different numbers of
simultaneous clients and measuring the average response time.  Because
of the simple structure of the servers, this simple, repeatable
benchmark measures the underlying implementation of concurrency rather
than the performance tweaks of a production web server.

\begin{figure}[htb]
    \begin{center}
      \includegraphics[width=75mm]{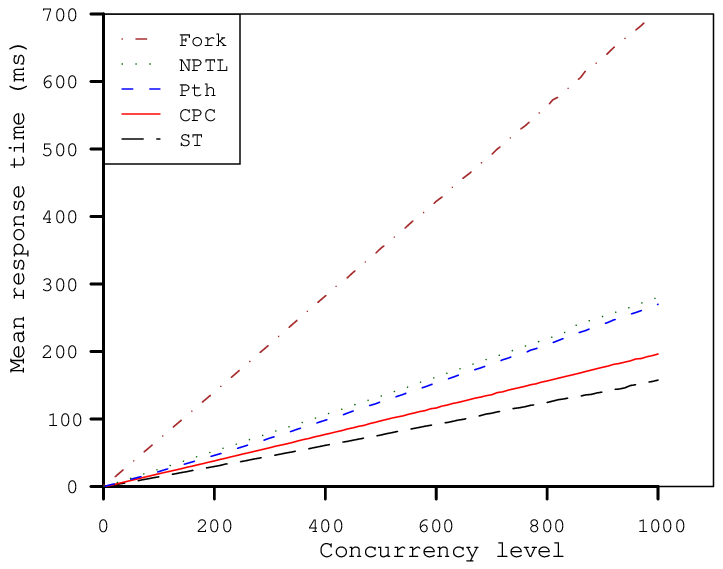}
  \end{center}
  \vspace*{-0.5cm}
  \caption{Web servers comparison}
  \label{fig:compare}
\end{figure}

Figure~\ref{fig:compare} presents the results of our experiment.  It
shows the average serving time per request against the number of
concurrent requests; a smaller slope indicates a faster server.  It
shows that on this particular benchmark all of our web servers show the
expected linear behaviour, and that CPC's performance is comparable to
that of \emph{ST}, and much better than that of the other implementations.
A more in-depth analysis of this benchmark is available in a technical
report \cite{kerneis}.

\subsection{Hekate}
\label{sec:bench-hekate}

Benchmarking a BitTorrent seeder is a difficult task because it relies
either on a real-world load, which is hard to control and only provides
seeder-side information, or on an artificial testbed, which might fail
to accurately reproduce real-world behaviour.  We have however been able
to collect enough empirical evidence to show that Hekate
(Section~\ref{sec:cpc-language}) is an efficient implementation of
BitTorrent, lightweight enough to be run on embedded hardware.

\paragraph{Real-world workload}
To benchmark the ability of Hekate to sustain a real-world load, we need
popular torrents with many requesting peers over a long period of time.
Updates for Blizzard's game \textit{World of Warcraft} (WoW), distributed
over BitTorrent, meet these conditions: each of the millions of WoW players
around the world runs a hidden BitTorrent client, and at any time many of
them are looking for the latest update.

We have run an instance of Hekate seeding WoW updates without
interruption for weeks.  We saw up to 1,000 connected peers (800 on
average) and a throughput of up to 10\,MB/s (around 5\,MB/s on average).
Hekate never used more than 10\,\% of the 3.16\,GHz dual core CPU of our
benchmarking machine, and was bottlenecked either by the available bandwidth
during peaks of requests (10\,MB/s), or by the mere fact that we did not gather
enough peers to saturate the link.

\paragraph{Stress-test on embedded hardware}

We have ported Hekate to \emph{OpenWrt}\footnote{\url{http://openwrt.org/}},
a Linux distribution for embedded devices.  Hekate runs flawlessly on
a cheap home router with a 266\,MHz \emph{MIPS 4Kc} CPU, 32\,MB of RAM
and a 100\,Mbps network card.  The torrent files were kept on a USB
flash drive.

Our stress-test consists in 1,000 simultaneous clients (implemented as
a single CPC program running on a fast computer directly connected to
the device over 100\,Mbps Ethernet) requesting random chunks of
a 1.2\,GB torrent.  In these conditions, Hekate was able to serve data
at a rate of 2.9\,MB/s.  The CPU load was pegged at 100\%, with most of
the CPU time spent servicing software interrupt requests, notably from
the USB subsystem (60\,\% \textit{sirq}, the usb-storage kernel module
using up to 25\,\% of CPU).  These data indicate that even on a slow
MIPS processor, Hekate's performance is fairly good, and that the
performance could be much improved by using a device on which mass
storage traffic doesn't go over the USB bus.

\section{The CPC compilation technique}
\label{sec:compilation-technique}

The current implementation of CPC is structured into three parts: the
CPC to C translator, implemented in Objective Caml \cite{ocaml} on top
of the CIL framework \cite{cil}, the runtime, implemented in C, and the
standard library, implemented in CPC.  The three parts are as
independent as possible, and interact only through a small set of
well-defined interfaces.  This makes it easier to experiment with
different approaches.

In this section, we present how the CPC translator turns a CPC program
in threaded style into an equivalent C program written in
continuation-passing style.  Therefore, we only need to focus on
the transformations applied to cps functions, ignoring completely the
notions of CPC thread or CPC primitive which are handled in the runtime
part of CPC.   We detail each step of the CPC compilation technique and
the difficulties specifically related to the C language.

\subsection{Translation passes}

The CPC translator is structured in a series of source-to-source
transformations which turn a threaded-style CPC program into an
equivalent event-driven C program.  This sequence of transformations
consists of the following passes:
\begin{description}
    \item[\em Boxing] a small number of variables needs to be encapsulated
        in environments to ensure the correctness of the later passes;
    \item[\em Splitting] the flow of control of each cps function is split
        into a set of mutually recursive, tail-called, inner functions;
    \item[\em Lambda-lifting] free local variables are copied from one inner
        function to the next, yielding closed inner functions;
    \item[\em CPS conversion] at this point, the program is in
        \emph{CPS-convertible form}, a form simple enough to
        perform a one-pass partial conversion into continuation-passing
        style; the resulting continuations are used at runtime by the
        CPC scheduler.
\end{description}
The converted program is then compiled by a standard C compiler and
linked to the CPC scheduler to produce the final executable.

All of these passes are well-known compilation techniques for functional
programming languages, but lambda-lifting and CPS conversion are not
correct in general for an imperative call-by-value language such as C.
The problem is that these transformations copy free variables: if the
original variable is modified, the copy becomes out-of-date and the
program yields a different result.

Copying is not the only way to handle free variables.  When applying
lambda-lifting to a call-by-value language with mutable variables, the common
solution is to box variables that are actually mutated, storing them in
environments.  However, this is less efficient: adding a layer of indirection
hinders cache efficiency and breaks a number of compiler optimisations.
Therefore, CPC strives to avoid boxing as much as possible.

One key point of the efficiency of CPC is that we need not box every
mutated variable for lambda-lifting and CPS conversion to be correct.
As we show in Section~\ref{sec:lifting}, even though C is an imperative
language, lambda-lifting is correct without boxing for most variables,
provided that the lifted functions are called in tail position.  As it turns
out, functions issued from the splitting pass are always called in tail
position, and it is therefore correct to perform lambda-lifting in CPC while
keeping most variables unboxed.

Only a small number of variables, whose addresses have been retained
with the ``address of'' operator (\verb+&+), need to be boxed: we call
them extruded variables.  Our experiments with Hekate show that 50\,\%
of local variables in cps functions need to be lifted.  Of that number,
only 10\,\% are extruded.  In other words, in the largest program
written in CPC so far, our translator manages to box only 5\,\% of the
local variables in cps functions.

Sections~\ref{sec:cps-conversion} to~\ref{sec:lifting-intro} present
each pass, explain how they interact and why they are correct.  Although CPS
conversion is the last pass performed by the CPC translator, we present it first
(Section~\ref{sec:cps-conversion}) because it helps understanding the purpose of
the other passes, which aim at translating the program into CPS-convertible
form; the other passes are then presented chronologically.  Also note that the
correctness of the lambda-lifting pass depends on a theorem that will be shown
in Section~\ref{sec:lifting}.

\subsection{CPS conversion}\label{sec:cps-conversion}

{\em Conversion into Continuation Passing Style\/}
\cite{strachey,plotkin}, or {\em CPS conversion\/} for short,
is a program transformation technique that makes the flow of control of
a program explicit and provides first-class abstractions for it.

Intuitively, the {\em continuation\/} of a fragment of code is an
abstraction of the action to perform after its execution.  For example,
consider the following computation:
\begin{verbatim}
f(g(5) + 4);
\end{verbatim}
The continuation of \texttt{g(5)} is \texttt{f($\cdot$ + 4)} because the
return value of \texttt{g} will be added to 4 and then passed to
\texttt{f}.

CPS conversion consists in replacing every function $f$ in a program
with a function $f^*$ taking an extra argument, its {\em continuation}.
Where $f$ would return with value $v$, $f^*$ invokes its continuation
with the argument $v$.
A CPS-converted function therefore never returns, but makes a call to
its continuation.  Since all of these calls are in tail position, a
converted program doesn't use the native call stack: the information
that would normally be in the call stack (the dynamic chain) is
encoded within the continuation.

This translation has three well-known interesting properties, on which
we rely in the implementation of CPC:
\begin{description}
    \item[\em CPS conversion need not be global]
The CPS conversion is not an ``all or nothing'' deal, in which the
complete program must be converted: there is nothing preventing a
converted function from calling a function that has not been converted.
On the other hand, a function that has not been converted cannot call a
function that has, as it does not have a handle to its own continuation.

It is therefore possible to perform CPS conversion on just a subset of
the functions constituting a program (in the case of CPC, such functions
are annotated with the \texttt{cps} keyword).  This allows cps code to
call code in direct style, for example system calls or standard library
functions.  Additionally, at least in the case of CPC, a cps function
call is much slower than a direct function call; being able to only
convert the functions that need the full flexibility of continuations
avoids this overhead as much as possible.

\item[\em Continuations are abstract data structures]

The classical definition of CPS conversion implements continuations as
functions.  However, continuations are abstract data structures and
functions are only one particular concrete representation of them.

The CPS conversion process performs only two operations on
continuations: calling a continuation, which we call \emph{invoke}, and
prepending a function application to the body of a continuation, which
we call \emph{push}.  This property is not really surprising: as
continuations are merely a representation for the dynamic chain, it is
natural that there should be a correspondence between the operations
available on a continuation and a stack.

Since C doesn't have full first-class functions (closures), CPC uses
this property to implement continuations as arrays of function pointers
and parameters.

\item[\em Continuation transformers are linear]
CPS conversion introduces linear (``one-shot'') continuations
\cite{berdine,bruggeman}: when a CPS-converted function receives a
continuation, it will use it exactly once, and never duplicate or
discard it.

This property is essential for memory management in CPC: as CPC uses
the C allocator (\verb|malloc| and \verb|free|) rather than a garbage
collector for managing continuations, it allows reliably reclaiming
continuations without the need for costly devices such as reference
counting.
\end{description}

CPS conversion is not defined for every C function; instead, we restrict
ourselves to a subset of functions, which we call the {\em
    CPS-convertible\/} subset.  As we shall see in
Section~\ref{sec:splitting}, every C function can be converted to an
equivalent function in CPS-convertible form.

The CPS-convertible form restricts the calls to cps functions to
make it straightforward to capture their continuation.  In
CPS-convertible form, a call to a cps function \texttt{f} is either
in tail position, or followed by a tail call to another cps function
whose parameters are \emph{non-shared} variables, that cannot be
modified by \texttt{f}.  This restriction about shared variables is
ensured by the \emph{boxing} pass detailed in Section~\ref{sec:boxing}.

\begin{definition}[Extruded and shared variables]\label{def:extruded}
\emph{Extruded variables} are local variables (or function parameters)
the address of which has been retained using the ``address of'' operator
(\verb|&|).

\emph{Shared variables} are either extruded or global variables.\qed
\end{definition}
Thus, the set of shared variables includes every variable that might be modified
by several functions called in the same dynamic chain.

\begin{definition}[CPS-convertible form]
    A function {\tt h} is in \emph{CPS-convertible form} if every call to
    a cps function that it contains matches one of the following
    patterns, where both \texttt{f} and \texttt{g} are cps functions,
    \texttt{a$_\text{\tt 1}$, ..., a$_\text{\tt n}$} are any C expressions and
    \texttt{x, y$_\text{\tt 1}$, ..., y$_\text{\tt n}$} are non-shared
    variables:
    \begin{align}
        \text{\tt return f(e$_\text{\tt 1}$, \ldots, e$_\text{\tt n}$);}\\
        \text{\tt x = f(e$_\text{\tt 1}$, \ldots, e$_\text{\tt n}$); return g(x,
            y$_\text{\tt 1}$, \ldots, y$_\text{\tt n}$);}\\
        \text{\tt f(e$_\text{\tt 1}$, \ldots, e$_\text{\tt n}$); return g(x,
            y$_\text{\tt 1}$, \ldots, y$_\text{\tt n}$);}\label{useless1}\\
        \text{\tt f(e$_\text{\tt 1}$, \ldots, e$_\text{\tt n}$); return;}\label{useless2}\\
        \text{\tt f(e$_\text{\tt 1}$, \ldots, e$_\text{\tt n}$); g(x,
            y$_\text{\tt 1}$, \ldots, y$_\text{\tt n}$); return;}\label{useless3}\\
        \text{\tt x = f(e$_\text{\tt 1}$, \ldots, e$_\text{\tt n}$); g(x,
            y$_\text{\tt 1}$, \ldots, y$_\text{\tt n}$);
            return;}\label{useless4}
    \end{align}
    \qed
\end{definition}
We use \texttt{return} to explicitly mark calls in tail position.  The
forms (\ref{useless1}) to (\ref{useless4}) are only necessary to
handle the cases where \texttt{f} and \texttt{g} return \texttt{void};
in the rest of this paper, we ignore these cases and focus on the
essential cases (1) and (2).

\paragraph{Early evaluation of non-shared variables}
To understand why the definition of CPS\hyp{}convertible form requires non-shared
variables for the parameters of \texttt{g}, consider what happens when
converting a CPS-convertible function.  We use a partial CPS conversion, as
explained above, focused on tail positions.  In a functional language, we would
define the CPS conversion as follows, where $\text{\tt f}^\star$ is
the CPS-converted form of \texttt{f} and \texttt{k} the current continuation:
\begin{center}
{\tt
\begin{tabular}{lcl}
    return a; &$\rightarrow$& return k(a);\vspace{1.2\jot}\\
    return f(a$_\text{\tt 1}$, \ldots, a$_\text{\tt n}$);
    &$\rightarrow$&
    return f$^\star$(a$_\text{\tt 1}$, \ldots, a$_\text{\tt n}$, k);\vspace{1.2\jot}\\
    x = f(a$_\text{\tt 1}$, \ldots, a$_\text{\tt n}$);&
    \multirow{2}{*}{$\rightarrow$}&
    k' = $\lambda$x.\  g$^\star$(x, y$_\text{\tt 1}$, \ldots, y$_\text{\tt n}$, k);
    \\
    \quad return g(x, y$_\text{\tt 1}$, \ldots, y$_\text{\tt n}$);
    &&
    \quad return f$^\star$(a$_\text{\tt 1}$, \ldots, a$_\text{\tt n}$, k');
\end{tabular}
}
\end{center}
Note the use of the lambda-abstraction
$\lambda\text{\tt x. g$^\star$(x, y$_\text{\tt 1}$, ..., y$_\text{\tt n}$, k)}$
to represent the continuation of the call to \texttt{f$^\star$} in the last
case.  In a call-by-value language, this continuation can be described as:
\emph{``get the return value of\/ {\tt f}$^\star$, bind it to {\tt x}, evaluate
the variables {\tt y$_\text{\tt 1}$} to {\tt y$_\text{\tt n}$} and call {\tt
g}$^\star$ with these parameters.''}

Representing this continuation in C raises the problem of evaluating the values
of \texttt{y$_\text{\tt 1}$} to \texttt{y$_\text{\tt n}$} after the call to {\tt
    f}$^\star$ has completed: these variables are local to the enclosing CPS-converted
function and, as such, allocated in a stack frame which will be discarded
when it returns.  To keep them available until the evaluation of the
continuation, we would need to store them in an environment and garbage-collect
them at some point.  We want to avoid this approach as much as possible for
performance reasons since it implies an extra layer of indirection and extra
bookkeeping.

We use instead a technique that we call \emph{early evaluation} of variables.
It is based on the following property: if we can ensure that the variables
\texttt{y$_\text{\tt i}$} cannot be modified by the function \texttt{f}, then it
is correct to commute their evaluation with the call to \texttt{f}.  Because the
CPC translator produces code where these variables are not shared, thanks to the
boxing pass (Section~\ref{sec:boxing}), it is indeed guaranteed that they cannot
be modified by a call to another function in the same dynamic chain.  In CPC,
the CPS conversion therefore evaluates these variables when the
continuation is built, before calling \texttt{f}, and stores their values
directly in the continuation.

We finally define the CPS conversion pass, using early evaluation and
the fact mentioned above that continuations are abstract data structures
manipulated by two operators: \texttt{push}, which adds a function to
the continuation, and \texttt{invoke} which executes a continuation.
\begin{definition}[CPS conversion]
    \label{def:cps-conversion}
    The \emph{CPS conversion} translates the tail positions of every
    CPS-convertible term as follows, where $\text{\tt f}^\star$ is the
    CPS-converted form of \texttt{f} and \texttt{k} is the current
    continuation:
\begin{center}
{\tt
\begin{tabular}{lcl}
    return a; &$\rightarrow$& return invoke(k, a);\vspace{1.2\jot}\\
    \multirow{2}{*}{return f(a$_\text{\tt 1}$, \ldots, a$_\text{\tt n}$);}
    &\multirow{2}{*}{$\rightarrow$}&
    push (k, f$^\star$, a$_\text{\tt 1}$, \ldots, a$_\text{\tt n}$);\\
    &&\quad return invoke(k);\vspace{1.2\jot}\\
    \multirow{3}{*}{\begin{tabular}{@{}l}x = f(a$_\text{\tt 1}$, \ldots, a$_\text{\tt n}$);\\
            \quad return g(x, y$_\text{\tt 1}$, \ldots, y$_\text{\tt n}$);
 \end{tabular}}
 &&
 push (k, g$^\star$, y$_\text{\tt 1}$, \ldots, y$_\text{\tt n}$);\\
 &$\rightarrow$&
 \quad push (k, f$^\star$, a$_\text{\tt 1}$, \ldots, a$_\text{\tt n}$);\\
 &&
 \quad return invoke(k);
\end{tabular}
}
\end{center}
\qed
\end{definition}
A proof of correctness of this conversion, including a proof that early
evaluation is correct in the absence of shared variables, is available in the
technical report \cite{kerneis12}.

\paragraph{Implementation}
From an implementation standpoint, the continuation {\tt k} is a stack of
function pointers and parameters, and \texttt{push} adds elements to
this stack.  The function \texttt{invoke} calls the first function of
the stack, with the rest of the stack as its parameters.  The form
\texttt{invoke(k, a)} also takes care of passing the value \texttt{a} to
the continuation \texttt{k}; it simply pushes {\tt a} at the top of the
continuation before calling its first function.

The current implementation of \texttt{push} grows continuations by a
multiplicative factor when the array is too small to contain the pushed
parameters, and never shrinks them.  While this might in principle
waste memory in the case of many long-lived continuations with an
occasional deep call stack, we believe that this case is rare enough not
to justify complicating the implementation with this optimisation.

The function \texttt{push} does not align parameters on word boundaries,
which leads to smaller continuations and easier store and load
operations.  Although word-aligned reads and writes are more efficient
in general, our tests showed little or no impact in the CPC programs we
experimented with, on x86 and x86-64 architectures: the worst case has
been a 10\,\% slowdown in a micro-benchmark with deliberately misaligned
parameters.  We have reconsidered this trade-off when porting CPC to
MIPS, an architecture with no hardware support for unaligned accesses,
and added a compilation option to align continuation frames.  However
we keep this option disabled by default, until we have more experimental
data to assess its efficiency.

There is one difference between Definition~\ref{def:cps-conversion} and
the implementation.  In a language with proper tail calls, each function
would simply invoke the continuation directly; in C, which is not
properly tail-recursive, doing that leads to unbounded growth of the
native call stack.  Therefore, the tail call \texttt{return invoke(k)}
cannot be used; we work around this issue by using a ``trampolining''
technique \cite{ganz}.  Instead of calling \texttt{invoke}, each cps
function returns its continuation: \texttt{return k}.  The main
event loop iteratively receives these continuations and invokes them
until a CPC primitive returns \texttt{NULL}, which yields to the next
CPC thread.

In the following sections, we show how the boxing, splitting and
lambda-lifting passes translate any CPC program into CPS-convertible
form.

\subsection{Boxing}\label{sec:boxing}

Boxing is a common, straightforward technique to encapsulate mutable variables.
It is necessary to ensure the correctness of the CPS conversion and
lambda-lifting passes (Sections~\ref{sec:cps-conversion}
and~\ref{sec:lifting-intro}).  However, boxing induces an expensive indirection
to access boxed variables.  In order to keep this cost at an acceptably low
level, we box only a subset of all variables --- namely extruded variables, whose
address is retained with the ``address of'' operator \verb+&+ (see
Definition~\ref{def:extruded}, p.~\pageref{def:extruded}).

\paragraph{Example} Consider the following function:
\begin{verbatim}
cps void f(int x) {
  int y = 0;
  int *p1 = &x, *p2 = &y;
  /* ... */
  return;
}
\end{verbatim}
The local variables {\tt x} and {\tt y} are extruded, because their
address is stored in the pointers {\tt p1} and {\tt p2}.  The boxing
pass allocates them on the heap at the beginning of the function {\tt f}
and frees them before {\tt f} returns.  For instance, in the next program every
occurrence of {\tt x} has been replaced by a pointer indirection {\tt *px}, and
{\tt \&x} by the pointer {\tt px}.
\begin{verbatim}
cps void f(int x) {
  int *px = malloc(sizeof(int));
  int *py = malloc(sizeof(int));
  *px = x;                 /* Initialise px */
  *py = 0;
  int *p1 = px, *p2 = py;
  /* ... with x and y replaced accordingly */
  free(px); free(py);
  return;
}
\end{verbatim}
The extruded variables {\tt x} and {\tt y} are not used anymore (except
to initialise {\tt px}).  Instead, {\tt px} and {\tt py} are used; note
that these variables, contrary to {\tt x} and {\tt y}, are not extruded:
they hold the address of other variables, but their own address is not
retained.  After the boxing pass, there are no more extruded variables
used in cps functions.

\paragraph{Cost analysis}

The efficiency of CPC relies in great part on avoiding boxing as much as
possible.  Performance-wise, we expect boxing only extruded
variables to be far less expensive than boxing every lifted variable.
Indeed, in a typical C program, few local variables have their address
retained compared to the total number of variables.

Experimental data confirm this intuition: in Hekate, the CPC translator
boxes 13 variables out of 125 lifted parameters.  This result is
obtained when compiling Hekate with the current CPC implementation.  They
take into account the fact that the CPC translator tries to be smart
about which variables should be boxed or lifted: for instance, if the
address of a variable is retained with the ``address of'' operator
\verb|&| but never used, this variable is not considered as extruded.
Using a naive implementation, however, does not change the proportion of
boxed variables: with optimisations disabled, 29 variables are boxed out
of 323 lifted parameters.  In both cases, CPC boxes about 10\,\% of the
lifted variables, which is an acceptable overhead.

\paragraph{Interaction with other passes}
The boxing pass yields a program without the ``address of'' operator
({\tt \&}); extruded variables are allocated on the heap, and only
pointers to them are copied by the lambda-lifting and CPS-conversion
passes rather than extruded variables themselves.  One may wonder,
however, whether it is correct to perform boxing before every
other transformation.  It turns out that boxing does not interfere with
the other passes, because they do not introduce any additional
``address of'' operators.  The program therefore remains free of
extruded variables.  Moreover, it is preferable to box early, before
introducing inner functions, since it makes it easier to identify the
entry and exit points of the original function, where variables are
allocated and freed.

\paragraph{Extruded variables and tail calls}
Although we keep the cost of boxing low, with about 10\,\% of boxed
variables, boxing has another, hidden cost: it breaks tail recursive cps
calls.  Since the boxed variables might, in principle, be used during
the recursive calls, one cannot free them beforehand.  Therefore,
functions featuring extruded variables do not benefit from the automatic
elimination of tail recursive calls induced by the CPS conversion.
While this prevents CPC from optimising tail recursive calls ``for
free'', it is not a real limitation: the C language does not ensure
the elimination of tail recursive calls anyway, as the stack frame
should similarly be preserved in case of extruded variables, and C
programmers are used not to rely on it.

\subsection{Splitting}\label{sec:splitting}

To transform a CPC program into CPS-convertible form, the CPC
translator needs to ensure that every call to a cps function is either
in tail position or followed by a tail call to another cps function.  In
the original CPC code, calls to cps functions might, by chance, respect
this property but, more often than not, they are followed by some
direct-style (non-cps) code.  The role of the CPC translator is therefore to
replace this direct-style chunk of code following a cps call by a tail call
to a cps function which encapsulates this chunk.  We call this pass
\emph{splitting} because it splits each original cps function into many,
mutually recursive, cps functions in CPS-convertible form.

To reach CPS-convertible form, the splitting pass must introduce tail calls
after every existing cps call.  This is done in two steps: we first introduce
a \texttt{goto} after every existing cps call (Section~\ref{sec:explicit-flow}),
then we translate these \texttt{goto} into tail calls
(Section~\ref{sec:goto-elimination}).

The first step consists in introducing a \texttt{goto} after every cps call.  Of
course, to keep the semantics of the program unchanged, this inserted
\texttt{goto} must jump to the statement following the tail call in the
control-flow graph: it makes the control flow explicit, and prepares the second
step which translates these \texttt{goto} into tail calls.  In most cases, the
control flow falls through linearly and inserting \texttt{goto} statements is
trivial; as we shall see in Section~\ref{sec:explicit-flow}, more care is
required when the control flow crosses loops and labelled blocks.  This step
produces code which resembles CPS-convertible form, except that it uses
\texttt{goto} instead of tail calls.

The second step is based on the observation that tail calls are equivalent to
jumps.  We convert each labelled block into an inner function, and each
\texttt{goto} statement into a tail call to the associated function, yielding a
program in CPS-convertible form.

We detail these two steps in the rest of this section.

\subsubsection{Explicit flow of control}\label{sec:explicit-flow}

When a cps call is not in tail position, or followed by a tail cps call, the CPC
translator adds a \texttt{goto} statement after it, to jump explicitly to the
next statement in the control-flow graph.  These inserted \texttt{goto} are to
be converted into tail cps calls in the next step.

In most cases, the control flow falls through linearly, and making it explicit
is trivial.  For instance,
\begin{verbatim}
cpc_yield(); rc = 0;
\end{verbatim}
becomes
\begin{verbatim}
cpc_yield(); goto l;
l: rc = 0;
\end{verbatim}
However, loops and conditional jumps require more care:
\begin{verbatim}
while(!timeout) {
  int rc = cpc_read();
  if(rc <= 0) break;
  cpc_write();
}
reset_timeout();
\end{verbatim}
is converted to:
\begin{verbatim}
while_label:
  if(!timeout) {
    int rc = cpc_read(); goto l;
    l:
      if(rc <= 0) goto break_label;
      cpc_write(); goto while_label;
  }
break_label:
  reset_timeout();
\end{verbatim}
More generally, when the flow of control after a cps call goes outside
of a loop (\texttt{for}, \texttt{while} or \texttt{do ... while})
or a \texttt{switch} statement, the CPC translator simplifies these
constructs into \texttt{if} and \texttt{goto} statements, adding the
necessary labels on the fly.

Although adding trivial \texttt{goto}, and making loops explicit, brings the
code in a shape close to CPS-convertible form, we need to add some more
\texttt{goto} statements for the second step to correctly encapsulate chunks of
code into cps functions.  Consider for instance the following piece of code:
\begin{verbatim}
if(rc < 0) {
  cpc_yield(); rc = 0;
}
printf("rc = %d\n", rc);
return rc;
\end{verbatim}
With the rules described above, it would be translated to:
\begin{verbatim}
if(rc < 0) {
  cpc_yield(); goto l;
  l: { rc = 0; }
}
printf("rc = %d\n", rc);
return rc;
\end{verbatim}
This is not enough because the block labelled by \texttt{l} will be
converted to a cps function in the second step.  But if we encapsulate only the
\texttt{\{rc = 0;\}} part, we will miss the call to \texttt{printf} when
\texttt{rc < 0}.  Therefore, to ensure a correct encapsulation in the
second step, we also need to make the flow of control explicit when
exiting a labelled block.  The example becomes:
\begin{verbatim}
if(rc < 0) {
  cpc_yield(); goto l;
  l: { rc = 0; goto done; }
}
done:
  printf("rc = %d\n", rc);
  return rc;
\end{verbatim}

After this step, every cps call is either in tail position or followed
by a \texttt{goto statement}, and every labelled block exits with either
a \texttt{return} or a \texttt{goto} statement.

\subsubsection{Eliminating gotos}\label{sec:goto-elimination}

It is a well-known fact in the compiler folklore that a tail call is
equivalent to a goto.  It is perhaps less known that a goto is equivalent
to a tail call \cite{Ste76,wijngaarden}: the block of any destination label
\texttt{l} is encapsulated in an inner function \texttt{l()}, and each
\texttt{goto l;} is replaced by a tail call to \texttt{l}.

Coming back to the first example of Section~\ref{sec:explicit-flow}, the
label \texttt{l} yields a function \texttt{l()}:
\begin{verbatim}
f(); return l();
cps void l() { rc = 0; }
\end{verbatim}
We see that the first line is now in CPS-convertible form.  Note
again that we use \texttt{return} to mark tail calls explicitly in the C
code.

Applying the same conversion to loops yields mutually recursive
functions.  For instance, the \texttt{while} loop in the second example
is converted into \texttt{while\_label()}, \texttt{l()} and
\texttt{break\_label()}:
\begin{verbatim}
while_label();
cps void while_label() {
  if(!timeout) {
    int rc = cpc_read(); return l();
    cps void l() {
      if(rc <= 0) return break_label();
      cpc_write(); return while_label();
    }
  }
}
cps void break_label() {
  reset_timeout();
}
\end{verbatim}
When a label, like \texttt{while\_label} above, is reachable not only
through a \texttt{goto}, but also directly through the linear flow of
the program, it is necessary to call its associated function at the
point of definition; this is what we do on the first line of this
example.

Note that splitting may introduce free variables; for instance, in the
previous example, \texttt{rc} is free in the function \texttt{l}.  In
this intermediate C-like code, inner functions are considered just like any
other C block when it comes to the scope and lifespan of variables: each
variable lives as long as the block or function which defines it, and it
can be read and modified by any instruction inside this block, including
inner functions.  There is in particular no copy of free variables in
inner functions; variables are shared inside their block.   In the
previous example, the lifespan of the variable \texttt{rc} is the
\texttt{if} block, including the call to the function \texttt{l} which
reads the free variable allocated in its enclosing function
\texttt{while\_label}.

The third example shows the importance of having explicit flow of
control at the end of labelled blocks.  After \texttt{goto}
elimination, it becomes:
\begin{verbatim}
if(rc < 0) {
  cpc_yield(); return l();
  cps int l() { rc = 0; return done(); }
}
cps int done() {
  printf("rc = %d\n", rc);
  return rc;
}
return done();
\end{verbatim}
Note the tail call to \texttt{done} at the end of \texttt{l} to ensure
that the \texttt{printf} is executed when \texttt{rc < 0}, and again the
call on the last line to execute it otherwise.

\subsubsection{Implementation}

The actual implementation of the CPC translator implements splitting on
an as-needed basis: a given subtree of the abstract syntax tree (AST) is
only transformed if it contains CPC primitives that cannot be
implemented in direct style.  Our main concern here is to transform the
code as little as possible, on the assumption that the gcc compiler is
optimised for human-written code.

To perform splitting, the CPC translator iterates over the AST, checking
whether the cps functions are in CPS-convertible form and
interleaving the two steps described above to reach CPS-convertible form
incrementally.  On each pass, when the translator finds a cps call it
dispatches on the statement following it:
\begin{itemize}
  \item in the case of a {\tt return}, the fragment is already CPS-convertible
  and the translator continues;
  \item in the case of a {\tt goto}, it is converted into a tail call,
      with the corresponding label turned into an inner function
      (Section~\ref{sec:goto-elimination}), and the translator starts
      another pass;
  \item for any other statement, a {\tt goto} is added to make the flow
      of control explicit, converting enclosing loops too if necessary
      (Section~\ref{sec:explicit-flow}).  The translator then starts
      another pass and will eventually convert the introduced {\tt goto}
      into a tail call.
\end{itemize}

At the end of the splitting pass, the translated program is in
CPS-convertible form.  However, it is not quite ready for CPS conversion
because we introduced inner functions, which makes it invalid C.  In
particular, these functions may contain free variables.  The next pass,
lambda-lifting, takes care of these free variables to get a valid C
program in CPS-convertible form, suitable for the CPS conversion pass
described in Section~\ref{sec:cps-conversion}.

\subsection{Lambda-lifting}\label{sec:lifting-intro}

Lambda-lifting \cite{johnsson} is a standard technique to eliminate free
variables in functional languages.  It proceeds in two phases
\cite{danvy}.  In the first pass (``parameter lifting''), free variables
are replaced by local, bound variables.  In the second pass (``block
floating''), the resulting closed functions are floated to top-level.

\paragraph{Example}
Coming back to the latest example, we add an enclosing function
\texttt{f} to define \texttt{rc} and make the fragment self-contained:
\begin{verbatim}
cps int f(int rc) {
  if(rc < 0) {
    cpc_yield(); return l();
    cps int l() { rc = 0; return done(); }
  }
  cps int done() {
    printf("rc = %d\n", rc);
    return rc;
  }
  return done();
}
\end{verbatim}
The function \texttt{f} contains two inner functions, \texttt{l} and
\texttt{done}.  The local variable \texttt{rc} is used as a free
variable in both of these functions.

Parameter lifting consists in adding the free variable as a parameter of
every inner function:
\begin{verbatim}
cps int f(int rc) {
  if(rc < 0) {
    cpc_yield(); return l(rc);
    cps int l(int rc) { rc = 0; return done(rc); }
  }
  cps int done(int rc) {
    printf("rc = %d\n", rc);
    return rc;
  }
  return done(rc);
}
\end{verbatim}
Note that \texttt{rc} is now a parameter of \texttt{l} and
\texttt{done}, and has been added accordingly whenever these functions
where called.  There are, now, three copies of \texttt{rc};
alpha-conversion makes this more obvious:
\begin{verbatim}
cps int f(int rc1) {
  if(rc1 < 0) {
    cpc_yield(); return l(rc1);
    cps int l(int rc2) { rc2 = 0; return done(rc2); }
  }
  cps int done(int rc3) {
    printf("rc = %d\n", rc3);
    return rc3;
  }
  return done(rc1);
}
\end{verbatim}
Once the parameter lifting step has been performed, there is no free variable
anymore and the block floating step consists in extracting these inner,
closed functions at top-level:
\begin{verbatim}
cps int f(int rc1) {
  if(rc1 < 0) {
    cpc_yield(); return l(rc1);
  }
  return done(rc1);
}
cps int l(int rc2) {
  rc2 = 0;
  return done(rc2);
}
cps int done(int rc3) {
  printf("rc = %d\n", rc3);
  return rc3;
}
\end{verbatim}
Applying boxing, splitting and lambda-lifting always yields
CPS-convertible programs:
\begin{itemize}
    \item every call to a cps function is either a tail call (not
        affected by the transformations) or followed by a tail cps call
        (introduced in the splitting pass),
    \item the parameters of this second cps call are local variables,
        since they are introduced by the lambda-lifting pass,
    \item these parameters are not shared because they are neither
        global (local variables) nor extruded (the boxing pass
        ensures that there are no more extruded variables in the
        program).
\end{itemize}

\paragraph{Lambda-lifting in imperative languages}
There is one major issue with applying lambda-lifting to C
extended with inner functions: this transformation is in general not
correct in a call-by-value languages with mutable variables.

Consider what would happen if splitting were to yield the following code:
\begin{verbatim}
cps int f(int rc) {
  cps void set() { rc = 0; return; }
  cps void done() {
    printf("rc = %d\n", rc);
    return;
  }
  set(); return done();
}
\end{verbatim}
This code, which is in CPS-convertible form, prints out \texttt{rc = 0}
whatever the original value of \texttt{rc} was: the call to \texttt{set}
sets \texttt{rc} to 0 and the call to \texttt{done} displays it.

Once lambda-lifted, this code becomes:
\begin{verbatim}
cps int f(int rc1) {
  set(rc1); return done(rc1);
}
cps void set(int rc2) {
  rc2 = 0;
  return;
}
cps void done(int rc3) {
  printf("rc = %d\n", rc3);
  return;
}
\end{verbatim}
The result has changed: the code now displays the original value of
\texttt{rc} passed to \texttt{f} rather than 0.  The reason why
lambda-lifting is incorrect in that case is because \texttt{set} and
\texttt{done} work on two separate copies \texttt{rc}, \texttt{rc2} and
\texttt{rc3}, whereas in the original code there was only one instance
of \texttt{rc} shared by all inner functions.

This issue is triggered by the fact that the function \texttt{set} is
not called in tail position.  This non-tail call allows us to observe
the incorrect value of \texttt{rc3} after \texttt{set} has modified the
copy \texttt{rc2} and returned.  If \texttt{set} were called in tail
position instead, the fact that it operates on a copy of \texttt{rc1}
would remain unnoticed.

\paragraph{Lambda-lifting and tail calls}
In fact, although lambda-lifting is not correct in general in a
call-by-value language with mutable variables, it becomes correct once
restricted to functions called in tail position and non-extruded
parameters.
More precisely, in the absence of extruded variables, it is safe to
lift a parameter provided every inner function where this parameter is
lifted is only called in tail position.  We show this result in
Section~\ref{sec:lifting} (Theorem~\ref{thm:lambda-lifting-correctness},
p.~\pageref{thm:lambda-lifting-correctness}).

Inner functions in CPC are the result of \texttt{goto} elimination
during the splitting step.  As a result, they are always called in tail
position.  Moreover, as explained in Section~\ref{sec:boxing}, the
boxing pass ensures that extruded variables have been eliminated at
this point of the transformation.  Hence, lambda-lifting is correct in
the case of CPC.

\paragraph{Implementation}  To lift as few variables as possible,
lambda-lifting is implemented incrementally.  Rather than lifting every
parameter, the translators looks for free variables to be lifted in a
cps function and adds them as parameters at call points; this creates
new free variables, and the translator iterates until it reaches a fixed
point.

This implementation might be further optimised with a liveness analysis,
which would in particular avoid lifting  uninitialised parameters.  The
current translator performs a very limited analysis: only variables
which are used (hence alive) in a single function are not lifted.

\paragraph{Experimental results}
The common technique to use lambda-lifting in an imperative language is
to box every mutated variable, in order to duplicate pointers to these
variables instead of the variables themselves.  To quantify the amount
of boxing avoided by our technique of lambda-lifting tail-called
functions, we used a modified version of CPC which blindly boxes every
lifted parameter and measure the amount of boxing that it induced in
Hekate, the most substantial program written with CPC so far.

Hekate contains 260 local variables and function parameters, spread
across 28 cps functions\footnote{These numbers leave out direct-style
    functions, which do not need to be converted, and around 200 unused
    temporary variables introduced by a single pathological
    macro-expansion in the \textit{curl} library.}.  Among them, 125
variables are lifted.  A naive lambda-lifting pass would therefore need
to box almost 50\,\% of the variables.

On the other hand, boxing extruded variables only carries a much smaller
overhead: in Hekate, the current CPC translator boxes only 5\,\% of the
variables in cps functions.  In other words, 90\,\%  of the lifted
variables in Hekate are safely left unboxed, keeping the overhead
associated with boxing to a reasonable level.

\section{Lambda-lifting in an imperative language}
\label{sec:lifting}

To prove the correctness of lambda-lifting in an imperative,
call-by-value language when functions are called in tail position, we do
not reason directly on CPC programs, because the semantics of C is too
broad and complex for our purposes.  The CPC translator leaves most
parts of converted programs intact, transforming only control structures
and function calls.  Therefore, we define a simple language with
restricted values, expressions and terms, that captures the features we
are most interested in (Section~\ref{sec:definitions}).

The reduction rules for this language (Section~\ref{sec:naive-def}) use
a simplified memory model without pointers and enforce that local
variables are not accessed outside of their scope, as ensured by our
boxing pass.  This is necessary since we have seen that lambda-lifting
is not correct in general in the presence of extruded variables.

It turns out that the ``naive'' reduction rules defined in
Section~\ref{sec:naive-def} do not provide strong enough invariants to
prove this correctness theorem by induction, mostly because we represent
memory with a store that is deeply modified by lambda-lifting.
Therefore, in Section~\ref{sec:semopt}, we define an equivalent,
``optimised'' set of reduction rules which enforces more regular stores
and closures.

The proof of correctness is then carried in
Section~\ref{sec:correction-ll} using these optimised rules.  We first
define the invariants needed for the proof and formulate a strengthened
version of the correctness theorem (Theorem~\ref{thm:correction-ll},
Section~\ref{sec:strong-invariants}).  A comprehensive overview of the
proof is then given in Section~\ref{sec:overview}.  The proof is fully
detailed in Section~\ref{sec:proof-correctness}, with the help of a few
lemmas to keep the main proof shorter
(Sections~\ref{sec:rewriting-lemmas} and~\ref{sec:aliasing-lemmas}).

The main limitation of this proof is that
Theorems~\ref{thm:lambda-lifting-correctness}
and~\ref{thm:correction-ll} are implications, not equivalences: we do
not prove that if a term does not reduce, it will not reduce once
lifted.  For instance, this proof does not ensure that lambda-lifting
does not break infinite loops.

\subsection{Definitions}
\label{sec:definitions}

In this section, we define the terms
(Definition~\ref{def:full-language}), the reduction rules
(Section~\ref{sec:naive-def}) and the lambda-lifting transformation
itself (Section~\ref{sec:lifting-def}) for our small imperative
language.  With these preliminary definitions, we are then able to
characterise \emph{liftable parameters}
(Definition~\ref{dfn:var-liftable-simple}) and state the main
correctness theorem (Theorem~\ref{thm:lambda-lifting-correctness},
Section~\ref{sec:correctness}).

\begin{definition}[Values, expression and terms]\label{def:full-language}

Values are either boolean and integer constants or $\unit$, a special
value for functions returning \texttt{void}.
\[v \Coloneqq \quad\unit \;|\; \true \;|\; \false \;|\; n \in \mathbf{N}\]

Expressions are either values or variables.  We deliberately omit
arithmetic and boolean operators, with the sole concern of avoiding
boring cases in the proofs.
\[\expr \Coloneqq \quad v \;|\; x\;|\; \dotsc\]

Terms are made of assignments, conditionals, sequences, recursive
functions definitions and calls.
    \begin{align*}
    T \Coloneqq & \expr
        \;|\; x \coloneqq T
        \;|\; \ite{T}{T}{T}
        \;|\; T\ ;\ T \\
        \;|\; & \letrec{f(\range{x}{1}{n})}{T}{T}
        \;|\; f(T,\dotsc,T)
        \end{align*}
        \qed
\end{definition}
Our language focuses on the essential details affected by the
transformations: recursive functions, conditionals and memory accesses.
Loops, for instance, are ignored because they can be expressed in terms
of recursive calls and conditional jumps --- and that is, in fact, how
the splitting pass translates them (Section~\ref{sec:splitting}).
Since lambda-lifting happens after the splitting pass, our language
need to include inner functions (although they are not part of the C
language), but it can safely exclude \texttt{goto} statements.

One important simplification of this language compared to C is the lack of
pointers.  However, remember that we are lifting only local, stack-allocated
variables, and that these variables cannot be accessed outside of their scope,
as ensured by our boxing pass.  Since we get rid of the ``address of'' operator
\verb+&+, pointers remaining in CPC code after boxing always point to the heap,
never to the stack.  Adding a heap and pointers to our language would only make
it larger without changing the proof of correctness.

\subsubsection{Naive reduction rules\label{sec:naive-def}}

\paragraph{Environments and stores}
Handling inner functions requires explicit closures in the reduction
rules.  We need environments, written $\rho$, to bind variables to
locations, and a store, written $s$, to bind locations to values.

\emph{Environments} and \emph{stores} are partial functions, equipped
with a single operator which extends and modifies a partial function:
\ajout{\cdot}{\cdot}{\cdot}.

\begin{definition} The modification (or extension) $f'$ of a partial
function $f$, written $f' = \ajout{f}{x}{y}$, is defined as follows:
\begin{align*}
f'(t) =& \begin{cases}
y&\text{when $t$ = $x$}\\
f(t)&\text{otherwise}
\end{cases}\\
\dom(f') =& \dom(f)\cup\{x\}
\end{align*}
\qed
\end{definition}

\begin{definition}[Environments of variables and functions]
Environments of variables are defined inductively by
\[\rho \Coloneqq \varepsilon \;|\; (x,l)\concat\rho,\]
i.e.\ the empty domain function and $\ajout{\rho}{x}{l}$ (respectively).

Environments of functions, on the other hand, associate function names
to closures:
\[\F : \{f, g, h, \dotsc\} \rightarrow
    \{\fun{\range{x}{1}{n}}{T}{\rho,\F}\}.\]
\qed
\end{definition}

Note that although we have a notion of locations, which correspond
roughly to memory addresses in C, there is no way to copy, change or
otherwise manipulate a location directly in the syntax of our language.
This is on purpose, since adding this possibility would make
lambda-lifting incorrect: it translates the fact, ensured by the boxing
pass in the CPC translator, that there are no extruded variables in the
lifted terms.

\paragraph{Reduction rules}
We use classical big-step reduction rules for our language
(Figure~\ref{sem-proof:naive}, p.~\pageref{sem-proof:naive}).

\begin{figure}
\begin{gather*}
  \inferrule*[Left=(val)]{ }{\reductionN{v}{s}{v}{s}} \qquad\qquad
  \inferrule*[Left=(var)]{\rho\ x = l \in \dom\ s}{\reductionN{x}{s}{s\ l}{s}}\\
  \inferrule*[Left=(assign)]{\reductionN{a}{s}{v}{s'} \\
    \rho\ x = l \in \dom\ s'}{\
	\reductionN{x \coloneqq a}{s}{\unit}{\subst{s'}{l}{v}}}\qquad\qquad
  \inferrule*[Left=(seq)]{\reductionN{a}{s}{v}{s'} \\ \reductionN{b}{s'}{v'}{s''}}{\reductionN{a\ ;\ b}{s}{v'}{s''}}\\
  \inferrule*[Left=(if-t.)]{\reductionN{a}{s}{\true}{s'} \\ \reductionN{b}{s'}{v}{s''}}{\
	\reductionN{\ite{a}{b}{c}}{s}{v}{s''}}\qquad\qquad
  \inferrule*[Left=(if-f.)]{\reductionN{a}{s}{\false}{s'} \\ \reductionN{c}{s'}{v}{s''}}{\
	\reductionN{\ite{a}{b}{c}}{s}{v}{s''}}\\
  \inferrule*[Left=(letrec)]{\
    \reductionNF{b}{s}{v}{s'}{\mathcal{F'}} \\\\
    \mathcal{F'}=\ajout{\F}{f}{\fun{\range{x}{1}{n}}{a}{\rho,\F}}    }{\
	\reductionN{\letrec{f(\range{x}{1}{n})}{a}{b}}{s}{v}{s'}}\\
  \inferrule*[Left=(call)]{\
\F\,f = \fun{\range{x}{1}{n}}{b}{\rho',\mathcal{F'}}\\
\rho''= \drange{x}{l}{1}{n}\\
\text{$l_{i}$ fresh and distinct}\\\\
\forall i,\reductionN{a_i}{s_i}{v_i}{s_{i+1}} \\
\reductionNF[\rho''\concat\rho']{b}{\ajout{s_{n+1}}{l_i}{v_i}}{v}{s'}{\ajout{\mathcal{F'}}{f}{\F\,f}}
}{\
\reductionN{f(\range{a}{1}{n})}{s_{1}}{v}{s'}}
\end{gather*}
\caption{``Naive'' reduction rules\label{sem-proof:naive}}
\end{figure}

In the (call) rule, we need to introduce \emph{fresh} locations for the
parameters of the called function.  This means that we must choose
locations that are not already in use, in particular in the environments
$\rho'$ and $\F$.  To express this choice, we define two ancillary
functions, $\Env$ and $\Loc$, to extract the environments and locations
contained in the closures of a given environment of functions $\F$.
\begin{definition}[Set of environments, set of locations]
  \[\Env(\F) = \bigcup\left\{ \rho, \rho'
  \ |\ \fun{\range{x}{1}{n}}{M}{\rho,\mathcal{F'}} \in
  \Image(\F), \rho' \in \Env(\mathcal{F'})\right\}\]
  \[\Loc(\F) = \bigcup\left\{ \Image(\rho)
  \ |\ \rho \in
  \Env(\F)\right\}\]
  \[\text{A location $l$ is said to \emph{appear} in } \F \ssi l \in
      \Loc(\F).\]
  \qed
\end{definition}
These functions allow us to define fresh locations.
\begin{definition}[Fresh location] In the (call) rule, a location is
\emph{fresh} when:
\begin{itemize}
  \item $l \notin \dom(s_{n+1})$, i.e.\ $l$ is not already used in the store
  before the body of $f$ is evaluated, and
  \item $l$ doesn't appear in $\ajout{\mathcal{F'}}{f}{\F\,f}$, i.e.\
  $l$ will not interfere with locations captured in the environment of
  functions.
\end{itemize}
\qed
\end{definition}
Note that the second condition implies in particular that $l$ does not
appear in either $\F$ or $\rho'$.

\subsubsection{Lambda-lifting}\label{sec:lifting-def}

We mentioned in Section~\ref{sec:lifting-intro} that lambda-lifting can
be split into two parts: parameter lifting and block floating.  We will
focus only on the first part here, since the second one is trivial.
Parameter lifting consists in adding a free variable as a parameter of
every inner function where it appears free.  This step is repeated until
every variable is bound in every function, and closed functions can
safely be floated to top-level.  Note that although the transformation
is called lambda-lifting, we do not focus on a single function and try
to lift all of its free variables; on the contrary, we define the
lifting of a single free parameter $x$ in every possible function.

Usually, smart lambda-lifting algorithms strive to minimize the number
of lifted variables.  Such is not our concern in this proof: parameters
are lifted in every function where they might potentially be free.  (In
our implementation, the CPC translator actually uses a smarter approach to
avoid lifting too many parameters, as explained in
Section~\ref{sec:lifting-intro}.)

\begin{definition}[Parameter lifting in a term]\label{dfn:lifted-term}
Assume that $x$ is defined as a parameter of a given function $g$, and
that every inner function in $g$ is called $h_i$ (for some
$i\in\mathbf{N}$).  Also assume that function parameters are unique before
lambda-lifting.

\noindent Then, the \emph{lifted form} $\lift{M}$ of the term $M$ with
respect to $x$ is defined inductively as follows:
{\allowdisplaybreaks
  \begin{gather*}
  \lift{\unit} = \unit \qquad \lift{n} = n \\
  \lift{true} = true \qquad \lift{false} = false \\
  \lift{y} = y \quad \text{ and } \quad \lift{y \coloneqq a}= y \coloneqq
  \lift{a} \quad \text{(even if $y=x$)} \\
  \lift{a\ ;\ b} = \lift{a}\ ;\ \lift{b} \\
  \lift{\ite{a}{b}{c}} = \ite{\lift{a}}{\lift{b}}{\lift{c}} \\
  \lift{ \letrec{f(\range{x}{1}{n})}{a}{b} } =
  \begin{cases}
  \letrec{f(\range{x}{1}{n},x)}{\lift{a}}{\lift{b}}  &\text{if $f = h_i$}\\
  \letrec{f(\range{x}{1}{n})}{\lift{a}}{\lift{b}}   &\text{otherwise}
  \end{cases}\\
  \lift{f(\range{a}{1}{n})} =
  \begin{cases}
  f(\lift{a_{1}},\dotsc,\lift{a_{n}},x)&\text{if $f = h_i$ for some $i$}\\
  f(\lift{a_{1}},\dotsc,\lift{a_{n}})&\text{otherwise}
  \end{cases}
  \end{gather*}
}
  \qed
\end{definition}

\subsubsection{Correctness condition}\label{sec:correctness}

We claim that parameter lifting is correct for variables defined in
functions whose inner functions are called exclusively in \emph{tail
position}.  We call these variables \emph{liftable parameters}.

We first define tail positions as usual \cite{clinger}:
\begin{definition}[Tail position]
\emph{Tail positions} are defined inductively as follows:
\begin{enumerate}
\item $M$ and $N$ are in tail position in \ite{P}{M}{N}.
\item $N$ is in tail position in $N$ and $M \ ;\ N$ and \letrec{f(\range{x}{1}{n})}{M}{N}.
\end{enumerate}
\qed
\end{definition}
A parameter $x$ defined in a function $g$ is liftable if every inner
function in $g$ is called exclusively in tail position.
\begin{definition}[Liftable parameter] \label{dfn:var-liftable-simple}
A parameter $x$ is \emph{liftable} in $M$ when:
\begin{itemize}
\item $x$ is defined as the parameter of a function $g$,
\item inner functions in $g$, named $h_i$, are called exclusively in
tail position in $g$ or in one of the $h_i$.
\end{itemize}
\qed
\end{definition}
Our main theorem is that performing parameter-lifting on a liftable
parameter preserves the reduction:
\begin{theorem}[Correctness of lambda-lifting]
\label{thm:lambda-lifting-correctness}
If $x$ is a liftable parameter in $M$, then
\[\exists t,
\reductionNF[\varepsilon]{M}{\varepsilon}{v}{t}{\varepsilon} \text{ implies }
\exists t',
\reductionNF[\varepsilon]{\lift{M}}{\varepsilon}{v}{t'}{\varepsilon}.\]
\end{theorem}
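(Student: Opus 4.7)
The plan is to proceed by induction on the derivation of the reduction of $M$, but the statement of Theorem~\ref{thm:lambda-lifting-correctness} is too weak to serve as an induction hypothesis: once we descend into function bodies, we must reason about non-empty stores, environments, and function environments. So the first step is to generalise the claim into a parameterised statement quantifying over an environment $\rho$, a store $s$, and a function environment $\F$ satisfying suitable invariants, and concluding that $\lift{M}$ reduces, in a related environment, store, and function environment, to the same value $v$, with the resulting stores related by the same invariants.

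The key invariants relate the original state to the lifted one. First, the lifted store may contain several locations all corresponding to one original location of the liftable parameter $x$ --- one for every inner $h_i$ frame currently on the dynamic chain --- but these aliases must hold a common value, equal to the value at $x$'s original location in the outer $g$-frame at the moment control entered the innermost $h_i$. Second, because every $h_i$ is called in tail position, once control passes into some $h_i$ the outer frame performs no further observation of $x$; this justifies that writes to the $h_i$-local copy need not be propagated back to $g$'s copy. Third, closures stored in the lifted function environment capture enriched variable environments reflecting the new parameter of each $h_i$, in a way consistent with the aliasing relation.

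With these invariants in place, the strengthened theorem goes by induction on the reduction derivation. The easy cases are (val), (var), (assign), (seq), (if-t.), (if-f.) and (letrec), where one applies the induction hypothesis and checks that the invariants are preserved; the (letrec) case uses the definition of $\lift{\cdot}$ on each $h_i$'s body to produce a closure satisfying the invariant by construction. The interesting case is (call), which splits into three subcases: a call to a function that is neither $g$ nor any $h_i$ (untouched by lifting, routine); a call that enters the defining scope of the liftable parameter $x$ (where the freshly allocated location for $x$ starts the alias class); and the critical subcase of a tail call to some $h_i$, where the lifted call allocates a fresh location $l'$ for the new parameter, initialised from the current alias, and one must show that evaluating $h_i$'s body with $l'$ in place of the captured location yields the same value --- which follows from the induction hypothesis after checking that $l'$ extends the alias class coherently.

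The main obstacle will be formulating the aliasing invariant precisely enough to be maintained through arbitrarily deep chains of tail calls between the $h_i$, while remaining weak enough to accommodate the fact that writes to the lifted copy in one $h_i$ frame may diverge from aliases further up the chain --- yet, thanks to tail-position calling, cannot be observed there. This is exactly why the paper introduces an optimised set of reduction rules (Section~\ref{sec:semopt}) enforcing regularity of stores and closures: the naive rules of Section~\ref{sec:naive-def} permit too much unstructured garbage in stores and function environments for the invariants to remain tractable. I would therefore first prove equivalence of the naive and optimised reductions on the relevant terms, then carry out the induction in the optimised semantics, and finally transport the conclusion back to recover Theorem~\ref{thm:lambda-lifting-correctness} in its original form.
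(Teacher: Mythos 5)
Your overall strategy coincides with the paper's: generalise the statement to arbitrary stores, environments and function environments carrying invariants (the paper's extended liftability, Definition~\ref{dfn:var-liftable}, together with lifted environments, Definition~\ref{dfn:lifted-env}), prove the naive and optimised semantics equivalent (Theorem~\ref{thm:sem-equiv}), carry out the induction in the optimised semantics with the tail call to some $h_i$ as the crux, and transport back. The substantive divergence is in the central invariant. You propose a binary simulation relation between the original and lifted stores built from ``alias classes'' of locations all holding a common value, and you yourself note the tension: a write inside the innermost $h_i$ frame updates only the lifted copy, so the common-value property cannot be maintained as stated. The paper does not resolve this by weakening the relation; it dissolves the problem by making the strengthened statement (Theorem~\ref{thm:correction-ll}) assert that the lifted term reduces from the \emph{same} store $s$ to the \emph{same} store $s'$ --- no cross-store relation at all --- which is only possible because the minimal-store discipline of the optimised rules reclaims the original location $l$ of $x$ (it lies in the tail environment of $g$) before the discrepancy between $l$ and the fresh $l'$ could be observed or survive into $\gc{\rhot}{s'}$; the only per-side invariant that remains is aliasing-\emph{freedom} (Invariant~\ref{case:share} of Definition~\ref{dfn:var-liftable}), not a correspondence between stores. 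Two further technical points you would hit when fleshing this out: the induction must be on the \emph{height} of the derivation rather than its structure, because in the $f=h_i$ case the derivation of the body must first be rewritten (moving $l$ into the tail environment, renaming it to a fresh $l'$, then reintroducing $l$ at its old value, Lemmas~\ref{lem:switch-x}--\ref{lem:intro-in-env}) before the induction hypothesis applies; and when a premise does not mention $x$ at all the invariants fail, but there lifting is the identity and the case is trivial.
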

Note that the resulting store $t'$ changes because lambda-lifting
introduces new variables, hence new locations in the store, and changes
the values associated with lifted variables;
Section~\ref{sec:correction-ll} is devoted to the proof of this theorem.
To maintain invariants during the proof, we need to use an equivalent,
``optimised'' set of reduction rules; it is introduced in the next section.

\subsection{Optimised reduction rules\label{sec:semopt}}

The naive reduction rules (Section~\ref{sec:naive-def}) are not
well-suited to prove the correctness of lambda-lifting.  Indeed, the
proof is by induction and requires a number of invariants on the
structure of stores and environments.  Rather than having a dozen of
lemmas to ensure these invariants during the proof of correctness, we
translate them as constraints in the reduction rules.

To this end, we introduce two optimisations --- minimal stores
(Section~\ref{sec:mini-store}) and compact closures
(Section~\ref{sec:compact-closures}) --- which lead to the definition of
an optimised set of reduction rules (Figure~\ref{sem-proof:opt},
Section~\ref{sec:opt-rules}).  The equivalence between optimised and
naive reduction rules is shown in the technical report~\cite{kerneis12}.

\subsubsection{Minimal stores} \label{sec:mini-store}

In the naive reduction rules, the store grows faster when reducing lifted
terms, because each function call adds to the store as many locations as it
has function parameters.  This yields stores of different sizes when
reducing the original and the lifted term, and that difference cannot be
accounted for locally, at the rule level.

Consider for instance the simplest possible case of lambda-lifting:
\begin{gather}
    \letrec{g(x)}{(\letrec{h()}{x}{h()})}{g(\unit)}\tag{original} \\
    \letrec{g(x)}{(\letrec{h(y)}{y}{h(x)})}{g(\unit)}\tag{lifted}
\end{gather}
At the end of the reduction, the store for the original term is
$\{l_x \mapsto \unit \}$ whereas the store for the lifted term is
$\{l_x \mapsto \unit ; l_y \mapsto \unit \}$.  More complex terms
would yield even larger stores, with many out-of-date copies of lifted
variables.

To keep the store under control, we need to get rid of useless variables
as soon as possible during the reduction.  It is safe to remove a
variable $x$ from the store once we are certain that it will never be
used again, i.e.\ as soon as the term in tail position in the function
which defines $x$ has been evaluated.  This mechanism is analogous
to the deallocation of a stack frame when a function returns.

To track the variables whose location can be safely reclaimed after the
reduction of some term $M$, we introduce \emph{split environments}.
Split environments are written $\env{\rhot}{\rho}$, where $\rhot$ is
called the \emph{tail environment} and $\rho$ the non-tail one; only the
variables belonging to the tail environment may be safely reclaimed.
The reduction rules build environments so that a variable $x$ belongs to
$\rhot$ if and only if the term $M$ is in tail position in the current
function $f$ and $x$ is a parameter of $f$.  In that case, it is safe to
discard the locations associated to all of the parameters of $f$,
including $x$, after $M$ has been reduced because we are sure that the
evaluation of $f$ is completed (and there is no first-class functions in
the language to keep references on variables beyond their scope of
definition).

We also define a  \emph{cleaning} operator, $\gc{\cdot}{\cdot}$, to
remove a set of variables from the store.
\begin{definition}[Cleaning of a store]
The store $s$ cleaned with respect to the variables in $\rho$, written
$\gc{\rho}{s}$, is defined as
$\gc{\rho}{s} = s |_{\dom(s)\setminus\Image(\rho)}$.
\qed
\end{definition}

\subsubsection{Compact closures} \label{sec:compact-closures}
Another source of complexity with the naive reduction rules is the inclusion of
useless variables in closures.  It is safe to remove from the
environments of variables contained in closures the variables that are also
parameters of the function: when the function is called, and the
environment restored, these variables will be hidden by the freshly
instantiated parameters.

This is typically what happens to lifted parameters: they are free
variables, captured in the closure when the function is defined, but
these captured values will never be used since calling the function adds
fresh parameters with the same names.  We introduce \emph{compact
closures} in the optimised reduction rules to avoid dealing with this
hiding mechanism in the proof of lambda-lifting.

A compact closure is a closure that does not capture any variable which
would be hidden when the closure is called because of function
parameters having the same name.
\begin{definition}[Compact closure and environment]
  A closure $\fun{\range{x}{1}{n}}{M}{\rho,\F}$ is a \emph{compact} closure
  if $\forall i, x_i\notin\dom(\rho)$ and \F\/ is compact.
  An environment is \emph{compact} if it contains only compact closures.
  \qed
\end{definition}
We define a canonical mapping from any environment $\F$ to a compact
environment $\close{\F}$, restricting the domains of every closure in
$\F$.
\begin{definition}[Canonical compact environment]
  The \emph{canonical compact environment} $\close{\F}$ is the
  unique environment with the same domain as $\F$ such that 
  \begin{align*}
    \forall f \in \dom(\F),
    \F\,f &= \fun{\range{x}{1}{n}}{M}{\rho,\mathcal{F'}}\\
    \text{implies }\close{\F}\ f &=
    \fun{\range{x}{1}{n}}{M}{\rho|_{\dom(\rho)\setminus\{\range{x}{1}{n}\}},\close{\mathcal{F'}}}.
  \end{align*}
    \qed
\end{definition}

\subsubsection{Optimised reduction rules} \label{sec:opt-rules}
Combining both optimisations yields the \emph{optimised} reduction rules
(Figure~\ref{sem-proof:opt}, p.~\pageref{sem-proof:opt}), used in
Section~\ref{sec:correction-ll} for the proof of lambda-lifting.

Consider for instance the rule (seq).
\[\inferrule*[Left=(seq)]{\reduction[\env{}{\rhot\concat\rho}]{a}{s}{v}{s'} \\
\reduction{b}{s'}{v'}{s''}}{\reduction{a\ ;\ b}{s}{v'}{s''}}\]
The environment of variables is split into the tail environment, $\rhot$, and the
non-tail one, $\rho$.  This means that $a\ ;\ b$ is in tail position in a function
whose parameters are the variables of $\rhot$. When we reduce the left part of
the sequence, $a$, we track the fact that it is not in tail position in this
function by moving $\rhot$ to the non-tail environment:
$\reduction[\env{}{\rhot\concat\rho}]{a}{s}{v}{s'}$.  On the other hand, when we
reduce $b$, we are in the tail of the term and the environment stays split.

As detailed above, we have introduced split environments in order to ensure
minimal stores.  Stores are kept minimal in the rules corresponding to tail
positions, the leaves of the reduction tree: (val), (var) and (assign).  In
these three rules, variables that appear in the tail environment are cleaned
from the resulting store: $\gc{\rhot}{s}$.

Finally, the (letrec) and (call) rules are modified to introduce compact
closures and split environments, respectively.  Compact closures are built in
the (letrec) rule by removing the parameters of $f$ from the captured
environment $\rho'$.  In the (call) rule, environments are split in a tail part,
which contains local variables of the called function, and a non-tail part,
which contains captured variables; only the former must be cleaned when the tail
instruction of the function is reduced.

\begin{figure}
\begin{gather*}
  \inferrule*[Left=(val)]{ }{\reduction{v}{s}{v}{\gc{\rhot}{s}}}\qquad\qquad
  \inferrule*[Left=(var)]{\rhot\concat\rho\ x = l \in \dom\ s}{\reduction{x}{s}{s\ l}{\gc{\rhot}{s}}}\\
  \inferrule*[Left=(assign)]{\reduction[\env{}{\rhot\concat\rho}]{a}{s}{v}{s'} \\
    \rhot\concat\rho\ x = l \in \dom\ s'}{\
	\reduction{x \coloneqq a}{s}{\unit}{\gc{\rhot}{\subst{s'}{l}{v}}}}\qquad\qquad
  \inferrule*[Left=(seq)]{\reduction[\env{}{\rhot\concat\rho}]{a}{s}{v}{s'} \\ \reduction{b}{s'}{v'}{s''}}{\reduction{a\ ;\ b}{s}{v'}{s''}}\\
  \inferrule*[Left=(if-t.)]{\reduction[\env{}{\rhot\concat\rho}]{a}{s}{\true}{s'} \\ \reduction{b}{s'}{v}{s''}}{\
	\reduction{\ite{a}{b}{c}}{s}{v}{s''}}\qquad\qquad
  \inferrule*[Left=(if-f.)]{\reduction[\env{}{\rhot\concat\rho}]{a}{s}{\false}{s'} \\ \reduction{c}{s'}{v}{s''}}{\
	\reduction{\ite{a}{b}{c}}{s}{v}{s''}}\\
  \inferrule*[Left=(letrec)]{\
    \reductionF{b}{s}{v}{s'}{\mathcal{F'}} \\\\
    \rho' = \rhot\concat\rho|_{\dom(\rhot\concat\rho)\setminus\{\range{x}{1}{n}\}} \\
    \mathcal{F'}=\ajout{\F}{f}{\fun{\range{x}{1}{n}}{a}{\rho',\F}}    }{\
	\reduction{\letrec{f(\range{x}{1}{n})}{a}{b}}{s}{v}{s'}}\\
  \inferrule*[Left=(call)]{\
\F\,f = \fun{\range{x}{1}{n}}{b}{\rho',\mathcal{F'}}\\
\rho''= \drange{x}{l}{1}{n}\\
\text{$l_{i}$ fresh and distinct}\\\\
\forall i,\reduction[\env{}{\rhot\concat\rho}]{a_i}{s_i}{v_i}{s_{i+1}} \\
\reductionF[\env{\rho''}{\rho'}]{b}{\ajout{s_{n+1}}{l_i}{v_i}}{v}{s'}{\ajout{\mathcal{F'}}{f}{\F\,f}}
}{\
\reduction{f(\range{a}{1}{n})}{s_{1}}{v}{\gc{\rhot}{s'}}}
\end{gather*}
\caption{Optimised reduction rules\label{sem-proof:opt}}
\end{figure}

\begin{theorem}[Equivalence between naive and optimised reduction rules]\label{thm:sem-equiv}
Optimised and naive reduction rules are equivalent: every reduction in
one set of rules yields the same result in the other.  It is necessary,
however, to take care of locations left in the store by the naive reduction:
  \[
  \reductionF[\env{\varepsilon}{\varepsilon}]{M}{\varepsilon}{v}{\varepsilon}{\varepsilon}
  \ssi
 \exists s,\reductionNF[\varepsilon]{M}{\varepsilon}{v}{s}{\varepsilon}
 \]
\end{theorem}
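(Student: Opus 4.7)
The plan is to strengthen the statement into a bi-directional simulation between the two systems, parametrised by a split environment $\env{\rhot}{\rho}$, a function environment $\F$, and a store $s$, and then proceed by induction on the derivation in each direction. The strengthened statement has to relate the two stores explicitly: an optimised reduction $\reduction{M}{s}{v}{s'}{}$ should correspond to a naive reduction $\reductionNF[\rhot\concat\rho]{M}{s\cup\sigma}{v}{s'\cup\sigma'}{\F'}$, where $\sigma$ collects the extra ``garbage'' locations that the optimised semantics has already reclaimed but the naive one still carries, $\sigma'\supseteq\sigma$, and $\F'$ agrees with $\close\F$ up to the parameter-shadowed bindings that compactification discards. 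The equivalence at the top level then follows because empty split environments give $\sigma=\varnothing$ on the left and the final store is empty iff the only live locations at the end are those in $\rhot=\varepsilon$, which is automatic.

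For the forward direction (optimised $\Rightarrow$ naive) I would induct on the optimised derivation. The leaf rules (\textsc{val}, \textsc{var}, \textsc{assign}) simply drop the $\gc{\rhot}{\cdot}$ by absorbing those locations into the garbage component $\sigma'$. For \textsc{seq}, \textsc{if} and \textsc{letrec} the induction is immediate, since the naive rules differ only by the absence of cleaning and by capturing a possibly non-compact $\rho'$; the extra bindings in the non-compact $\rho'$ are harmless because they are shadowed by the parameters at the next \textsc{call}, which a straightforward lemma on the (call) rule makes precise. The \textsc{call} rule itself is where freshness has to be rechecked: the naive store contains more locations, so the set of ``fresh'' candidates is smaller, but since locations are drawn from an infinite set this is no obstacle — we just rename if necessary, using a standard alpha-style lemma on stores.

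The reverse direction (naive $\Rightarrow$ optimised) is the harder one, and I expect it to be the main obstacle. The issue is that the naive semantics may freely read a location that the optimised semantics would have cleaned, and to rule that out we need an invariant stating that every location in $\Image(\rhot)$ at a given node of the derivation is dead in the continuation of the computation — i.e., it does not appear in any environment of the subsequent reduction of tail-positioned subterms, nor in the domains of closures that may still be invoked. This is precisely the scoping property that justifies stack allocation for local variables, and it has to be established by a separate lemma that tracks, by induction on terms, which locations can possibly be consulted after the term has been evaluated in tail position. The absence of first-class function values and of address-taking in the language is crucial here: a closure captured in $\F$ can only refer to variables visible where it was defined, and the compactification discards exactly those variables that will be shadowed on entry, so no reachable environment after a tail evaluation can mention a parameter of the enclosing function.

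With that liveness lemma in hand, the backward induction proceeds symmetrically to the forward one: at each leaf rule the cleaning step $\gc{\rhot}{\cdot}$ is justified because the removed locations have just been shown to be unreachable; at \textsc{call}, the compact closure $\close{\mathcal{F'}}$ replaces $\mathcal{F'}$ without affecting any read performed during the reduction of the body, for the same liveness reason applied to the captured environment. Finally, instantiating the generalised statement with $\rhot=\rho=\varepsilon$, $\F=\varepsilon$ and $s=\varepsilon$ gives the stated equivalence: the optimised side forces the final store to be empty because every location allocated during the reduction sits under some $\rhot$ and is therefore reclaimed by the time the outermost call returns, which matches the $\exists s$ on the naive side up to the garbage accumulated there.
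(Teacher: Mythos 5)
You should note first that the paper does not actually print a proof of Theorem~\ref{thm:sem-equiv}: it is explicitly deferred to the companion technical report, so the comparison here is against the intended argument rather than a written one. That said, your plan is the natural route and, as far as one can tell, the intended one: a strengthened bidirectional simulation parametrised by the split environment, the function environment and the store, proved by induction on the derivation, with the naive store decomposed as the optimised store plus a garbage component, and with closure environments related by compactification up to parameter shadowing. Two places in your sketch must still be turned into explicit lemmas before the induction closes. For the reverse direction, the ``liveness'' property is best stated as a reachability invariant on the optimised configuration, namely $\Image(\rhot\concat\rho)\cup\Loc(\F)\subseteq\dom(s)$, which is what guarantees that the premises $l\in\dom s$ of (var) and (assign) never fail on the smaller store; its preservation through (call) and (letrec) is precisely where the freshness side-conditions and the absence of first-class function values are used, as you observe. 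Separately, instantiating the generalised statement at $\rhot=\rho=\varepsilon$ only yields an optimised reduction to \emph{some} final store, whereas the theorem demands the empty store; this needs the additional (easy) invariant that every optimised reduction satisfies $\dom(s')\subseteq\dom(s)\setminus\Image(\rhot)$, which your closing sentence asserts informally but should be isolated and proved by its own induction. Neither point undermines the approach; they are the places where the outline has to become precise.
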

The proof of this theorem is detailed in the technical report~\cite{kerneis12}.

\subsection{Correctness of lambda-lifting\label{sec:correction-ll}}

In this section, we prove the correctness of lambda-lifting
(Theorem~\ref{thm:lambda-lifting-correctness},
p.~\pageref{thm:lambda-lifting-correctness}) by induction on the height of the
optimised reduction.

Section~\ref{sec:strong-invariants} defines stronger invariants and
rewords the correctness theorem with them.  Section~\ref{sec:overview}
gives an overview of the proof.  Sections~\ref{sec:rewriting-lemmas}
and~\ref{sec:aliasing-lemmas} prove a few lemmas needed for the proof.
Section~\ref{sec:proof-correctness} contains the actual proof of
correctness.

\subsubsection{Strengthened hypotheses}
\label{sec:strong-invariants}

We need strong induction hypotheses to ensure that key invariants about
stores and environments hold at every step.  For that purpose, we define
\emph{aliasing-free environments}, in which locations may not be
referenced by more than one variable, and \emph{local positions}.  They
yield a strengthened version of liftable parameters
(Definition~\ref{dfn:var-liftable}).  We then define lifted environments
(Definition~\ref{dfn:lifted-env}) to mirror the effect of
lambda-lifting in lifted terms captured in closures, and finally
reformulate the correctness of lambda-lifting in
Theorem~\ref{thm:correction-ll} with hypotheses strong enough to be
provable directly by induction.

\begin{definition}[Aliasing]\label{dfn:aliasing}
A set of environments $\mathcal{E}$ is \emph{aliasing-free}
when:
\[\forall \rho,\rho' \in \mathcal{E}, \forall x \in \dom(\rho), \forall y
\in \dom(\rho'),\
\rho\ x = \rho'\ y \Rightarrow x = y.
\]
By extension, an environment of functions $\F$ is aliasing-free when
$\Env(\F)$ is aliasing-free.  \qed
\end{definition}
The notion of aliasing-free environments is not an artifact of our small
language, but translates a fundamental property of the C semantics:
distinct function parameters or local variables are always bound to
distinct memory locations (Section~6.2.2, paragraph~6 in ISO/IEC 9899
\cite{iso9899}).  

A local position is any position in a term except inner functions.
Local positions are used to distinguish functions defined directly in a
term from deeper nested functions, because we need to enforce
Invariant~\ref{case:loc} (Definition~\ref{dfn:var-liftable}) on the
former only.
\begin{definition}[Local position]
\emph{Local positions} are defined inductively as follows:
\begin{enumerate}
\item $M$ is in local position in $M$, $x \coloneqq M$, $M \ ;\ M$,
  \ite{M}{M}{M} and $f(M,\dotsc,M)$.
\item $N$ is in local position in \letrec{f(\range{x}{1}{n})}{M}{N}.
\end{enumerate}
    \qed
\end{definition}

We extend the notion of liftable parameter
(Definition~\ref{dfn:var-liftable-simple},
p.~\pageref{dfn:var-liftable-simple}) to enforce invariants on stores and
environments.
\begin{definition}[Extended liftability]\label{dfn:var-liftable}
  The parameter $x$ is \emph{liftable} in $(M,\F,\rhot,\rho)$ when:
  \begin{enumerate}
    \item $x$ is defined as the parameter of a function $g$,
    either in $M$ or in $\F$,
    \label{case:def}
    \item in both $M$ and $\F$,
    inner functions in $g$, named $h_i$, are defined and called
    exclusively:
      \begin{enumerate}
      \item in tail position in $g$, or
      \item in tail position in some $h_j$ (with possibly $i=j$), or
      \item in tail position in $M$,
      \end{enumerate}
    \label{case:pos}
    \item for all $f$ defined in local position in $M$,
    $x \in \dom(\rhot\concat\rho) \Leftrightarrow \exists i, f = h_i$,
    \label{case:loc}
    \item moreover,
    if $h_i$ is called in tail position in $M$,
    then $x \in \dom(\rhot)$,
    \label{case:term}
    \item in \F,
    $x$ appears necessarily and exclusively in the environments of the
    $h_i$'s closures,
    \label{case:exclu}
    \item $\F$ contains only compact closures and
    $\Env(\F)\cup\{\rho,\rhot\}$ is aliasing-free.
    \label{case:share}
  \end{enumerate}
    \qed
\end{definition}

We also extend the definition of lambda-lifting
(Definition~\ref{dfn:lifted-term}, p.~\pageref{dfn:lifted-term}) to
environments, in order to reflect changes in lambda-lifted parameters
captured in closures.
\begin{definition}[Lifted form of an environment]\label{dfn:lifted-env}
  \begin{align*}
  \text{If } \F\,f =&
  \fun{\range{x}{1}{n}}{b}{\rho',\mathcal{F'}}\qquad\text{then}\\
  \lift{\F}\ f=&
  \begin{cases}
  \fun{\range{x}{1}{n}x}{\lift{b}}{\rho'|_{\dom(\rho')\setminus\{x\}},\lift{\mathcal{F'}}}&\text{when
  $f = h_i$ for some $i$}\\
  \fun{\range{x}{1}{n}}{\lift{b}}{\rho',\lift{\mathcal{F'}}}&\text{otherwise}
  \end{cases}
  \end{align*}
  \qed
\end{definition}
Lifted environments are defined such that a liftable parameter never
appears in them.  This property will be useful during the proof of
correctness.
\begin{lemma}\label{lem:Fstarclean}
  If $x$ is a liftable parameter in $(M,\F,\rhot,\rho)$,
  then $x$ does not appear in \lift{\F}.
\end{lemma}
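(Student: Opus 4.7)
The plan is to proceed by structural induction on the nesting depth of closures in $\F$, that is, on the finite structure of the $\mathcal{F'}$'s captured inside the closures of $\F$. The argument is a direct unwinding of two definitions: Definition~\ref{dfn:lifted-env}, which specifies how $\lift{\F}$ is built, and condition~\ref{case:exclu} of Definition~\ref{dfn:var-liftable}, which controls where $x$ may appear in $\F$.

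Concretely, I would fix some $f \in \dom(\F)$ with $\F\,f = \fun{\range{y}{1}{n}}{b}{\rho',\mathcal{F'}}$ and inspect the corresponding $\lift{\F}\,f$. If $f$ is one of the inner functions $h_i$ of $g$, then Definition~\ref{dfn:lifted-env} sets the captured environment of $\lift{\F}\,f$ to $\rho'|_{\dom(\rho')\setminus\{x\}}$, which removes $x$ from the domain by construction. If $f$ is not an $h_i$, then the captured environment of $\lift{\F}\,f$ equals $\rho'$ itself; but condition~\ref{case:exclu} guarantees that $x$ appears exclusively in the environments of the $h_i$'s closures, so $x \notin \dom(\rho')$ already. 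In both cases, it remains to show that $x$ does not appear in $\lift{\mathcal{F'}}$, which is exactly the induction hypothesis applied at one less level: since $\Env(\mathcal{F'}) \subseteq \Env(\F)$, the exclusivity condition~\ref{case:exclu} restricted to $\mathcal{F'}$ (with respect to the same family $\{h_i\}$ fixed by $g$) holds automatically.

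The only mild subtlety is verifying that the ``liftability data'' transfers cleanly to the nested $\mathcal{F'}$: the defining function $g$ and its set of inner functions $\{h_i\}$ are fixed once and for all, and Definition~\ref{dfn:lifted-env} treats each closure independently, so the induction hypothesis applies with exactly the same $x$ and the same $\{h_i\}$. I do not expect a real obstacle here; the lemma is essentially a bookkeeping observation capturing the fact that Definition~\ref{dfn:lifted-env} was designed precisely to eliminate $x$ from exactly those closures where condition~\ref{case:exclu} permits it to occur.
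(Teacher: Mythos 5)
Your proposal is correct and follows essentially the same route as the paper's proof, which simply invokes Invariant~\ref{case:exclu} (that $x$ occurs exclusively in the environments of the $h_i$'s closures) together with the fact that Definition~\ref{dfn:lifted-env} restricts exactly those environments to exclude $x$. Your induction on the nesting depth of closures merely makes explicit the recursion that the paper's two-line argument leaves implicit.
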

\begin{proof}
  Since $x$ is liftable in $(M, \F, \rhot, \rho)$,
  it appears exclusively in the environments of $h_i$.
  By definition, it is removed when building \lift{\F}. \qed
\end{proof}

These invariants and definitions lead to an enhanced correctness theorem.
\begin{theorem}[Correctness of lambda-lifting]\label{thm:correction-ll}
If $x$ is a liftable parameter in $(M,\F,\rhot,\rho)$, then
\[\reduction{M}{s}{v}{s'} \text{ implies }
\reductionF{\lift{M}}{s}{v}{s'}{\lift{\F}}\]
\end{theorem}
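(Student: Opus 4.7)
The plan is to proceed by induction on the height of the optimised reduction derivation $\reduction{M}{s}{v}{s'}$, case-splitting on the last rule used. For each case, the goal is to exhibit a reduction of $\lift{M}$ in the lifted environment $\lift{\F}$ reaching the same value $v$ and store $s'$, while simultaneously verifying that the liftability invariants of Definition~\ref{dfn:var-liftable} are preserved when passing to the sub-derivations so that the induction hypothesis can be applied. The leaf cases (val) and (var) are immediate: $\lift{v} = v$ and $\lift{y} = y$ in the syntax, and the cleaning $\gc{\rhot}{s}$ depends only on $\rhot$ and $s$, which are unchanged. The structural cases (assign), (seq), (if-t.), (if-f.) follow by applying the induction hypothesis to the sub-derivations; the only work is checking invariants (1)--(6), and most are preserved mechanically since the set of inner functions of $g$, the set of tail positions, and the shape of $\F$ do not change when descending into a subterm.

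The (letrec) case requires distinguishing whether the bound function $f$ is one of the $h_i$ or not, and checking that extending $\F$ to $\mathcal{F'}$ with a compact closure corresponds, under $\lift{\cdot}$, to extending $\lift{\F}$ with the closure prescribed by Definition~\ref{dfn:lifted-env}. The crucial check is that when $f = h_i$, compacting after adding $x$ to the parameter list produces exactly $\rho'|_{\dom(\rho')\setminus\{x\}}$, which is precisely the captured environment demanded by the lifted closure; invariants (3) and (5) control what can appear.

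The (call) rule is the main case. When $\F\,f$ is not one of the $h_i$, the lifted call has the same arity, the captured environment is unaffected for $x$ by invariant (5), and the induction hypothesis on $b$ combined with the hypotheses on each $a_i$ closes the case once one checks that the callee body $b$ still satisfies the liftability invariants in $\env{\rho''}{\rho'}$. The delicate subcase is $f = h_i$: here invariant (4) guarantees that the call sits in tail position of $M$, so $x \in \dom(\rhot)$ and the current location $l_x = \rhot\,x$ holds the value to be passed. In the lifted reduction, an additional fresh location $l_{n+1}$ is allocated for the new parameter $x$, initialised with $s_{n+1}\,l_x$, and the body is evaluated in $\env{\rho''\concat\{x\mapsto l_{n+1}\}}{\rho'|_{\dom(\rho')\setminus\{x\}}}$. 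The induction hypothesis applies, and the final $\gc{\rhot}{\cdot}$ of the outer call reclaims $l_x$, while the inner evaluation reclaims $l_{n+1}$ as part of its own tail cleaning. The subtle point is to verify that $s_{n+1}\,l_x$ is indeed the value $x$ would have had in the original reduction, which follows from invariant (5): since $x$ appears only in closures of the $h_i$, none of the intermediate argument reductions $a_i \to v_i$ can mutate $x$ through aliasing, a fact delivered by the aliasing lemmas of Section~\ref{sec:aliasing-lemmas}.

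The hardest part is managing the bookkeeping: keeping $\lift{\F}$ and $\F$ in lockstep through nested calls, proving that freshness of $l_{n+1}$ with respect to $\ajout{\lift{\mathcal{F'}}}{h_i}{\lift{\F}\,h_i}$ is implied by freshness with respect to $\ajout{\mathcal{F'}}{f}{\F\,f}$ (using Lemma~\ref{lem:Fstarclean} that $x$ does not appear in $\lift{\F}$), and showing that the aliasing-freedom of invariant (6) survives the allocation of $l_{n+1}$. These are exactly what the rewriting and aliasing lemmas of Sections~\ref{sec:rewriting-lemmas} and~\ref{sec:aliasing-lemmas} are designed to handle, so the structure of the argument will be: establish those lemmas first, then carry out the induction with their support, leaving the main inductive proof as a disciplined case analysis rather than an avalanche of side conditions.
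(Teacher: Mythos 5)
Your overall strategy --- induction on the height of the optimised derivation, a case split on the last rule, immediate leaf cases, mechanical invariant-checking for the structural rules, and a two-way split of the (call) case according to whether $f$ is one of the $h_i$ --- is exactly the paper's, and your identification of the supporting machinery (the rewriting lemmas of Section~\ref{sec:rewriting-lemmas}, the aliasing lemmas of Section~\ref{sec:aliasing-lemmas}, and Lemma~\ref{lem:Fstarclean}) matches the paper's toolbox. The cases other than (call) with $f = h_i$ go through essentially as you describe.

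In the crucial subcase $f = h_i$, however, your plan asserts that ``the induction hypothesis applies'' to the body evaluated in the environment $\env{\rho''\concat\{x\mapsto l_{n+1}\}}{\rho'|_{\dom(\rho')\setminus\{x\}}}$, and this is precisely the step that does not go through as stated. The induction hypothesis requires $x$ to be a liftable parameter in the callee's configuration $(b,\mathcal{F'},\rho'',\rho')$, and it is not: Invariant~\ref{case:term} of Definition~\ref{dfn:var-liftable} fails because $x\notin\dom(\rho'')$ (in the unlifted program $x$ is not a parameter of $h_i$) while some $h_j$ may be called in tail position in $b$. Moreover, the induction hypothesis, once applicable, yields a reduction of $\lift{b}$ in the \emph{same} store and environments as the original reduction of $b$ --- not in a new configuration where a fresh location already holds the lifted parameter. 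The paper therefore rewrites the original derivation of $b$ \emph{before} invoking the induction hypothesis: it splits $\rho'=(x,l)\concat\rho'''$ and moves $(x,l)$ into the tail environment (Lemma~\ref{lem:switch-x}), which is what restores liftability; only then does it apply the induction hypothesis, after which it alpha-renames $l$ to a fresh $l'$ (the location of the new parameter) and reintroduces $l$ with its old value via Lemmas~\ref{lem:intro-in-store} and~\ref{lem:intro-in-env}, so that the final $\gc{\rhot}{\cdot}$ reclaims it. Your plan gestures at these lemmas as ``bookkeeping,'' but the missing idea is that they must be applied to the \emph{unlifted} derivation, in this specific order, to manufacture a configuration in which the induction hypothesis is legitimately applicable; without that detour the induction does not close. (A smaller discrepancy: the value used to initialise the new location is read through the closure's captured environment $\rho'$, which is what the body of $h_i$ actually dereferences, rather than through the caller's $\rhot$ as you propose.)
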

Since naive and optimised reductions rules are equivalent
(Theorem~\ref{thm:sem-equiv}, p.~\pageref{thm:sem-equiv}), the proof of
Theorem~\ref{thm:lambda-lifting-correctness}
(p.~\pageref{thm:lambda-lifting-correctness}) is a direct corollary of this
theorem.
\begin{corollary}
If $x$ is a liftable parameter in $M$, then
\[\exists t,
\reductionNF[\varepsilon]{M}{\varepsilon}{v}{t}{\varepsilon} \text{ implies }
\exists t',
\reductionNF[\varepsilon]{\lift{M}}{\varepsilon}{v}{t'}{\varepsilon}.\]
\end{corollary}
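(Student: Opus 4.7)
The plan is to derive the corollary as a direct chaining of two already-established results: the equivalence between naive and optimised semantics (Theorem~\ref{thm:sem-equiv}) and the strengthened correctness of lambda-lifting (Theorem~\ref{thm:correction-ll}). The reasoning is essentially a round-trip: translate the premise into the optimised world, invoke Theorem~\ref{thm:correction-ll} there, and translate the conclusion back into the naive world.

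Concretely, I would first assume $\reductionNF[\varepsilon]{M}{\varepsilon}{v}{t}{\varepsilon}$ and apply the right-to-left direction of Theorem~\ref{thm:sem-equiv} to obtain $\reductionF[\env{\varepsilon}{\varepsilon}]{M}{\varepsilon}{v}{\varepsilon}{\varepsilon}$. I would then check that the simple liftability assumption (Definition~\ref{dfn:var-liftable-simple}) implies the extended liftability of $x$ in $(M,\varepsilon,\varepsilon,\varepsilon)$ in the sense of Definition~\ref{dfn:var-liftable}. With empty environments and an empty function environment $\F$, most clauses collapse: conditions~\ref{case:share} (compactness and absence of aliasing) are trivial for $\F=\varepsilon$ and $\rho=\rhot=\varepsilon$; conditions~\ref{case:exclu} and~\ref{case:term} hold vacuously because no closure of any $h_i$ has been built yet and no $h_i$ is callable outside $g$; and condition~\ref{case:loc} holds vacuously because the $h_i$, being inner functions of $g$, are not defined in a local position of $M$. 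Conditions~\ref{case:def} and~\ref{case:pos} transfer directly from Definition~\ref{dfn:var-liftable-simple}.

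With extended liftability in hand, Theorem~\ref{thm:correction-ll} yields $\reductionF[\env{\varepsilon}{\varepsilon}]{\lift{M}}{\varepsilon}{v}{\varepsilon}{\lift{\varepsilon}}$, and since $\lift{\varepsilon}=\varepsilon$, applying the left-to-right direction of Theorem~\ref{thm:sem-equiv} produces some $t'$ with $\reductionNF[\varepsilon]{\lift{M}}{\varepsilon}{v}{t'}{\varepsilon}$, which is exactly the desired conclusion. The only real content of the proof is the verification that extended liftability holds at the top level; the main obstacle, if any, is checking that the $h_i$ really are not exposed at local position in $M$ and that the vacuous satisfaction of condition~\ref{case:exclu} is indeed consistent with how the invariant will be used when reducing into $g$'s body, but this is guaranteed by the structural assumption that $h_i$ are \emph{inner} functions of $g$, so the corollary follows immediately.
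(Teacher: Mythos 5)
Your proposal is correct and follows essentially the same route as the paper, which also obtains the corollary by composing Theorem~\ref{thm:sem-equiv} with Theorem~\ref{thm:correction-ll}. Your explicit verification that simple liftability (Definition~\ref{dfn:var-liftable-simple}) entails extended liftability in $(M,\varepsilon,\varepsilon,\varepsilon)$ is a detail the paper leaves implicit (deferring it to the technical report), but it is the right check and your case analysis of the invariants is sound.
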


\subsubsection{Overview of the proof}
\label{sec:overview}

With the enhanced liftability definition, we have strong enough
invariants to perform a proof by induction of the correctness theorem.
This proof is detailed in Section~\ref{sec:proof-correctness}.

The proof is not by structural induction but by induction on the height
of the derivation.  This is necessary because, even with the stronger
invariants, we cannot apply the induction hypotheses directly to the
premises in the case of the (call) rule: we have to change the stores
and environments, which means rewriting the whole derivation tree,
before using the induction hypotheses.

For that reason, the most important and difficult case of the proof is
the (call) rule.  We split it into two cases: calling one of the lifted
functions ($f = h_i$) and calling another function (either $g$, where
$x$ is defined, or any other function outside of $g$).  Only the former
requires rewriting; the latter follows directly from the induction
hypotheses.

In the (call) rule with $f = h_i$, issues arise when reducing the body
$b$ of the lifted function.  During this reduction, indeed, the store
contains a new location $l'$ bound by the environment to the lifted
variable $x$, but also contains the location $l$ which
contains the original value of $x$.  Our goal is to show that the
reduction of $b$ implies the reduction of $\lift{b}$, with store and
environments fulfilling the constraints of the (call) rule.

To obtain the reduction of the lifted body $\lift{b}$, we modify the
reduction of $b$ in a series of steps, using several lemmas:
\begin{itemize}
    \item the location $l$ of the free variable $x$ is moved to the tail
        environment (Lemma~\ref{lem:switch-x});
    \item the resulting reduction meets the induction hypotheses, which
        we apply to obtain the reduction of the lifted body $\lift{b}$;
    \item however, this reduction does not meet the constraints of the
        optimised reduction rules because the location $l$ is not fresh:
        we rename it to a fresh location $l'$ to hold the lifted
        variable;
    \item finally, since we renamed $l$ to $l'$, we need to reintroduce
        a location $l$ to hold the original value of $x$
        (Lemmas~\ref{lem:intro-in-store} and~\ref{lem:intro-in-env}).
\end{itemize}
The rewriting lemmas used in the (call) case are shown in
Section~\ref{sec:rewriting-lemmas}.

For every other case, the proof consists in checking thoroughly that the
induction hypotheses apply, in particular that $x$ is liftable in the premises.
It consists in checking invariants of the extended liftability definition
(Definition~\ref{dfn:var-liftable}).  To keep the main proof as compact as
possible, the most difficult cases of liftability, related to aliasing, are
proven in some preliminary lemmas (Section~\ref{sec:aliasing-lemmas}).

One last issue arises during the induction when one of the premises does
not contain the lifted variable $x$.  In that case, the invariants do
not hold, since they assume the presence of $x$.  But it turns out that
in this very case, the lifting function is the identity (since there is
no variable to lift) and lambda-lifting is trivially correct.

\subsubsection{Rewriting lemmas} \label{sec:rewriting-lemmas}
Calling a lifted function has an impact on the resulting store: new
locations are introduced for the lifted parameters and the earlier
locations, which are not modified anymore, are hidden.  Because of these
changes, the induction hypotheses do not apply directly in the case of
the (call) rule for a lifted function $h_i$. We use the following three
lemmas to obtain, through several rewriting steps, a reduction of lifted
terms meeting the induction hypotheses.

\begin{itemize}
    \item Lemma~\ref{lem:switch-x} shows that moving a variable from the non-tail
environment $\rho$ to the tail environment $\rhot$ does not change the
result, but restricts the domain of the store.  It is used transform the
original free variable $x$ (in the non-tail environment) to its lifted
copy (which is a parameter of $h_i$, hence in the tail environment).
\item Lemmas~\ref{lem:intro-in-store} and~\ref{lem:intro-in-env}
    add into the store and the environment a fresh location, bound to an
    arbitrary value.  It is used to reintroduce the location containing
    the original value of $x$, after it has been alpha-converted to
    $l'$.
\end{itemize}

\begin{lemma}[Switching to tail environment]\label{lem:switch-x}
  If $\reduction[\env{\rhot}{(x,l)\concat\rho}]{M}{s}{v}{s'}$ and $x \notin
  \dom(\rhot)$ then
  $\reduction[\env{\rhot\concat(x,l)}{\rho}]{M}{s}{v}{s'|_{\dom(s')\setminus\{l\}}}$.
  Moreover, both derivations have the same height.
\end{lemma}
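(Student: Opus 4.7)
The plan is to proceed by induction on the height of the derivation $\reduction[\env{\rhot}{(x,l)\concat\rho}]{M}{s}{v}{s'}$, casing on the last rule applied.  The central observation, which makes the whole argument go through, is that the \emph{combined} lookup environment $\rhot\concat(x,l)\concat\rho$ coincides with $(\rhot\concat(x,l))\concat\rho$ under the hypothesis $x\notin\dom(\rhot)$, so every variable lookup, every assignment and every closure capture performed in the new derivation produces exactly the same value, location and closure as in the original one.  The only visible difference lies in the cleaning operations $\gc{\rhot}{\cdot}$ versus $\gc{\rhot\concat(x,l)}{\cdot}$, which appear explicitly in the conclusions of (val), (var), (assign) and (call).

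For those four ``cleaning'' rules, I would first record the pointwise identity
\[\gc{\rhot\concat(x,l)}{t} = (\gc{\rhot}{t})\bigl|_{\dom(\gc{\rhot}{t})\setminus\{l\}}\]
for any store $t$; this holds even when $l$ already appears in $\Image(\rhot)$, in which case both sides simply equal $\gc{\rhot}{t}$.  Plugging this into the conclusion of each of these rules directly yields the expected restricted store in the target derivation.  The premises of (val) and (var) are empty, while those of (assign) and (call) are either in non-tail position or use an independent split $\env{\rho''}{\rho'}$, so they are literally identical in both derivations and can be reused verbatim.

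For the ``recursive'' rules (seq), (if-t.), (if-f.) and (letrec), each premise is either in non-tail position, where the rule flattens the split to $\env{}{\rhot\concat(x,l)\concat\rho}$ and is therefore unchanged, or in tail position with the same split as the conclusion; in the latter case I apply the induction hypothesis at strictly smaller height and obtain the corresponding derivation under $\env{\rhot\concat(x,l)}{\rho}$.  In the (letrec) case one additionally checks that the captured closure environment $\rho' = (\rhot\concat(x,l)\concat\rho)|_{\dom(\cdot)\setminus\{\range{x}{1}{n}\}}$ is identical on both sides, since the restriction acts on the combined environment rather than on either half of the split.

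I do not foresee a real obstacle: the lemma is almost entirely bookkeeping about where the binding $(x,l)$ formally sits.  The only mildly delicate point is the freshness side-condition in the (call) rule, where the new locations $l_i$ must not appear in the incoming store or in the function environment; both are unchanged by the move of $(x,l)$, so freshness is inherited for free.  Height preservation is then immediate, since every case either reuses the original premises verbatim or invokes the induction hypothesis, which preserves height by construction.
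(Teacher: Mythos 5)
Your proposal is correct and matches the paper's own argument: the paper likewise proceeds by induction on the derivation, handling (val), (var), (assign) and (call) via exactly the identity you state, namely that $\gc{\rhot\concat(x,l)}{t}$ equals the restriction of $\gc{\rhot}{t}$ to its domain minus $\{l\}$, with the remaining rules passing through unchanged or by the induction hypothesis. The only cosmetic difference is that you phrase the induction on the height rather than on the structure of the derivation, which changes nothing here.
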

\begin{proof}
  By induction on the structure of the derivation.
  For the (val), (var), (assign) and (call) cases, we use the fact that
  $\gc{\rhot\concat(x,l)}{s} = s'|_{\dom(s')\setminus\{l\}}$ when
  $s' = \gc{\rhot}{s}$.
  \qed
\end{proof}
\begin{lemma}[Spurious location in store]\label{lem:intro-in-store}
  If $\reduction{M}{s}{v}{s'}$ and $k$ does not appear in either $s$, $\F$
  or $\rhot\concat\rho$,
  then, for all value $u$,
  $\reduction{M}{\ajout{s}{k}{u}}{v}{\ajout{s'}{k}{u}}$.
  Moreover, both derivations have the same height.
\end{lemma}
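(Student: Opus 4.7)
The plan is to proceed by induction on the height of the optimised reduction $\reduction{M}{s}{v}{s'}$, case-analyzing the last rule applied. The general idea is that since $k$ does not appear anywhere visible in the derivation, no rule ever reads or writes at location $k$, so the binding $k\mapsto u$ sits inertly at the bottom of the store from beginning to end, and every intermediate store in the rebuilt derivation is obtained from its counterpart in the original derivation by the same extension $\ajout{\cdot}{k}{u}$.

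For the leaf rules (val), (var) and (assign), the result store has the form $\gc{\rhot}{t}$ where $t$ is either $s$ or a pointwise update of $s$. Since $\gc{\rhot}{\cdot}$ removes only locations in $\Image(\rhot)$ and $k\notin\Image(\rhot\concat\rho)\supseteq\Image(\rhot)$, we have $\gc{\rhot}{\ajout{t}{k}{u}}=\ajout{\gc{\rhot}{t}}{k}{u}$, so the extension survives cleaning. In (var) and (assign) the lookup $\rhot\concat\rho\ x = l$ returns some $l\neq k$, so neither the read value nor the assigned location is affected by the extension. The structural rules (seq), (if-t.), (if-f.) and (letrec) follow by applying the induction hypothesis to each premise in turn; in particular, in (letrec) the captured closure uses $\rho'$ (a restriction of $\rhot\concat\rho$) and the existing $\F$, neither of which contains $k$ by hypothesis, so the non-appearance condition is preserved for the body premise.

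The main obstacle is the (call) rule. Applying the IH to the $n$ argument reductions threads $(k,u)$ through the intermediate stores $s_i$, using at each step that $k$ does not appear in the preceding store or in the unchanged $\rhot\concat\rho$ and $\F$. For the body reduction we must then check that $k$ does not appear in $\ajout{s_{n+1}}{l_i}{v_i}$, in $\env{\rho''}{\rho'}$, or in $\ajout{\mathcal{F'}}{f}{\F\,f}$. The two environments come from the closure $\F\,f$ together with the fresh $\rho''=\drange{x}{l}{1}{n}$, so they cannot contain $k$ (which is absent from $\F$ and from the newly chosen $l_i$). For the enlarged store the only additional requirement is $l_i\neq k$ for each $i$: this is the crux of the case. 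The freshness condition on the $l_i$ is cofinite, so they can always be chosen distinct from $k$; if the original derivation happens to use $l_i=k$ for some $i$, we first alpha-rename that location throughout the sub-derivation before applying the IH. Such renaming preserves the height and transforms the final store correspondingly, after which the IH produces the extended body reduction ending at $\ajout{s'}{k}{u}$. The final cleaning $\gc{\rhot}{\cdot}$ then preserves $k$ exactly as in the leaf cases, yielding $\reduction{f(\range{a}{1}{n})}{\ajout{s_1}{k}{u}}{v}{\ajout{\gc{\rhot}{s'}}{k}{u}}$ with a derivation of the same height as the original.
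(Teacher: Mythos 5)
Your proof is correct and follows essentially the same route as the paper's: induction on the height of the derivation, threading the binding $(k,u)$ through every store, with the only delicate point being a possible collision $l_j=k$ in the (call) rule, resolved by alpha-renaming that location before applying the induction hypothesis. The paper's own argument is just a terser statement of exactly this.
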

\begin{proof}
  By induction on the height of the derivation.  
  The key idea is to add $(k,u)$ to every store in the derivation tree.
  A collision might occur in the (call) rule, if there is some  $j$ such
  that $l_j = k$.  In that case, we need to rename $l_j$ to some fresh
  variable $l'_j \neq k$ (by alpha-conversion) before applying the
  induction hypotheses.
      \qed
\end{proof}
\begin{lemma}[Spurious variable in environments]\label{lem:intro-in-env}
  \begin{align*}
  \forall l,l', \reduction[\env{\rhot\concat(x,l)}{\rho}]{M}{s}{v}{s'} \ssi&
  \reduction[\env{\rhot\concat(x,l)}{(x,l')\concat\rho}]{M}{s}{v}{s'}
  \end{align*}
  Moreover, both derivations have the same height.
\end{lemma}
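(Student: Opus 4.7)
The plan is to prove the equivalence by induction on the height of the derivation, exploiting two structural facts about the two split environments $\env{\rhot\concat(x,l)}{\rho}$ and $\env{\rhot\concat(x,l)}{(x,l')\concat\rho}$. First, their tail components are identical, so the cleaning operator $\gc{\rhot\concat(x,l)}{\cdot}$ produces the same store on both sides at every leaf rule. Second, the binding $(x,l)$ already present in the tail shadows the extra $(x,l')$ in the non-tail, so the effective lookup environments $\rhot\concat(x,l)\concat\rho$ and $\rhot\concat(x,l)\concat(x,l')\concat\rho$ coincide as partial functions: lookups of $x$ yield $l$ on both sides, lookups of any other variable agree, and the domains match.

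Since the optimised reduction rules consult the non-tail environment only through this combined lookup $\rhot\concat\rho$ (in (var), (assign), and the argument-evaluation premises of (call)) and through the restriction $\rhot\concat\rho|_{\dom(\cdot)\setminus\{\range{x}{1}{n}\}}$ in (letrec), every rule applies on both sides with matching conclusions. In particular, the captured environment built in (letrec) is literally the same partial function on each side, so the augmented function environment $\F'$ is identical; in (call), the callee body is reduced under $\env{\rho''}{\rho'}$ where $\rho'$ is taken from the closure and is thus entirely unaffected by our modification. The induction hypothesis then applies directly to each premise whose split environment remains of the required shape (such as the second premise of (seq) or the body in (call)), and the equal-height claim follows automatically because the rules match one-to-one.

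The one subtlety --- and the main obstacle --- is that the rules (seq), (if-t.), (if-f.) and (assign) shift the whole tail environment into the non-tail slot when reducing a subexpression, producing premises of the form $\env{}{\rhot\concat(x,l)\concat\rho}$ versus $\env{}{\rhot\concat(x,l)\concat(x,l')\concat\rho}$, in which the literal pattern of the lemma ($(x,l)$ sitting on top of the tail part) no longer holds. To apply the induction hypothesis uniformly in these cases, I would strengthen the statement into the invariant: \emph{if two split environments have the same tail component and their effective lookup functions agree as partial functions, then the reductions coincide and have the same height}. The stated lemma is then an immediate instance of this generalisation, and the generalisation is what the induction naturally proves, with every case reducing to the two observations made in the first paragraph.
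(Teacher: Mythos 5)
Your proposal matches the paper's proof, which is a structural induction resting on exactly your key observation that $\rhot\concat(x,l)\concat\rho = \rhot\concat(x,l)\concat(x,l')\concat\rho$ as partial functions. The subtlety you identify about (seq), (if) and (assign) shifting the tail component into the non-tail slot is real but glossed over in the paper's two-line proof; your strengthened invariant (same tail component, same effective lookup) is the right way to make the induction hypothesis applicable to every premise and is implicitly what the paper relies on.
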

\begin{proof}
 By induction on the structure of the derivation.  The proof relies solely
 on the fact that $\rhot\concat(x,l)\concat\rho =
 \rhot\concat(x,l)\concat(x,l')\concat\rho$.
       \qed
\end{proof}

\subsubsection{Aliasing lemmas} \label{sec:aliasing-lemmas}
We need two lemmas to show that environments remain aliasing-free during
the proof by induction in Section~\ref{sec:proof-correctness}.
They are purely technical lemmas that consist in proving that the aliasing
invariant (Invariant~\ref{case:share}, Definition~\ref{dfn:var-liftable})
holds in the context of the (call) and (letrec) rules, respectively.
We only show the (call) lemma here; the (letrec) lemma is very similar and
detailed in the technical report \cite{kerneis12}.

\begin{lemma}[Aliasing in (call) rule]\label{lem:aliasing-call}
Assume that, in a (call) rule,
\begin{itemize}
    \item $\F\,f = \fun{\range{x}{1}{n}}{b}{\rho',\mathcal{F'}}$,
    \item $\Env(\F)$ is aliasing-free, and
    \item $\rho''= \drange{x}{l}{1}{n}$, with fresh and
    distinct locations $l_{i}$.
\end{itemize}
Then
$\Env(\ajout{\mathcal{F'}}{f}{\F\,f})\cup\{\rho',\rho''\}$ is also aliasing-free.
\end{lemma}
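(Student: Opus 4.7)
The plan is to decompose the set $\Env(\ajout{\mathcal{F'}}{f}{\F\,f})\cup\{\rho',\rho''\}$ into two parts: what comes from existing closures (which inherits aliasing-freeness from $\Env(\F)$) and the freshly built $\rho''$ (whose locations are by assumption brand new). The check of Definition~\ref{dfn:aliasing} then reduces to two independent verifications.

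First I would show the inclusion
\[\Env(\ajout{\mathcal{F'}}{f}{\F\,f})\cup\{\rho'\}\ \subseteq\ \Env(\F).\]
Since $\F\,f = \fun{\range{x}{1}{n}}{b}{\rho',\mathcal{F'}}$ is in $\Image(\F)$, the definition of $\Env$ gives directly $\rho'\in\Env(\F)$ and $\Env(\mathcal{F'})\subseteq\Env(\F)$. For the other entries of $\ajout{\mathcal{F'}}{f}{\F\,f}$, namely the closures inherited from $\mathcal{F'}$, their variable environments and the $\Env$ of their function-environment components are already contained in $\Env(\mathcal{F'})$, hence in $\Env(\F)$. The $f$-entry itself contributes $\rho'$ and $\Env(\mathcal{F'})$ again, both already accounted for. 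So the whole union on the left sits inside $\Env(\F)$, which is aliasing-free by hypothesis; in particular it is itself aliasing-free.

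Second, I would add $\rho''$ and check that no new aliasing is created. The two things to verify are internal aliasing inside $\rho''$ and cross aliasing with the previously handled set. Internal aliasing is ruled out because the $l_i$ are distinct, so $\rho''\,x_i = \rho''\,x_j$ forces $l_i=l_j$ and hence $i=j$, i.e.\ $x_i=x_j$. Cross aliasing is ruled out by freshness: by hypothesis each $l_i$ does not appear in $\ajout{\mathcal{F'}}{f}{\F\,f}$, i.e.\ $l_i\notin\Loc(\ajout{\mathcal{F'}}{f}{\F\,f})$, which means $l_i$ is not in the image of any environment of $\Env(\ajout{\mathcal{F'}}{f}{\F\,f})$; and since $\rho'$ is itself one of those environments (being the variable component of the closure $\F\,f$ now stored at $f$), $l_i$ does not appear in $\rho'$ either. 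Therefore an equation $\rho''\,x_i = \rho\,y$ with $\rho$ in $\Env(\ajout{\mathcal{F'}}{f}{\F\,f})\cup\{\rho'\}$ is impossible.

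Combining the two observations shows that $\Env(\ajout{\mathcal{F'}}{f}{\F\,f})\cup\{\rho',\rho''\}$ is aliasing-free. The only delicate point, and the step I would spend the most care on, is the inclusion in the first paragraph: one must be careful to unfold the definition of $\Env$ correctly on the \emph{updated} function environment, making sure that the ``hidden'' entry of $f$ in $\mathcal{F'}$ (if any) is replaced by $\F\,f$, whose components $\rho'$ and $\mathcal{F'}$ are already visible from $\F$ through the original entry at $f$.
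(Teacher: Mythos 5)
Your proposal is correct and follows essentially the same route as the paper's proof: establish the inclusion $\Env(\ajout{\mathcal{F'}}{f}{\F\,f})\cup\{\rho'\}\subseteq\Env(\F)$ to inherit aliasing\hyp{}freeness, then handle the pairs involving $\rho''$ by distinctness of the $l_i$ (internal) and freshness (cross). The only difference is that you unfold the definition of $\Env$ to justify the inclusion explicitly, which the paper simply asserts.
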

\begin{proof}
  Let $\mathcal{E} = \Env(\ajout{\mathcal{F'}}{f}{\F\,f})\cup\{\rho'\}$.
  We know that $\mathcal{E}\subset\Env(\F)$ so $\mathcal{E}$ is aliasing-free
  We want to show that adding fresh and distinct locations from
  $\rho''$ preserves this lack of aliasing.  More precisely,
  we want to show that
  \begin{align*}
  \forall \rho_1,\rho_2 \in \mathcal{E}\cup\{\rho''\},
  \forall x \in \dom(\rho_1),
  \forall y \in \dom(\rho_2),\
  \rho_1\ x = \rho_2\ y \Rightarrow x = y\\
  \intertext{given that}
  \forall \rho_1,\rho_2 \in \mathcal{E},
  \forall x \in \dom(\rho_1),
  \forall y \in \dom(\rho_2),\
  \rho_1\ x = \rho_2\ y \Rightarrow x = y.
  \end{align*}
  We reason by checking of all cases.
  If $\rho_1\in \mathcal{E}$ and $\rho_2\in \mathcal{E}$, immediate.
  If $\rho_1 = \rho_2 = \rho''$ then
  $\rho''\ x = \rho''\ y \Rightarrow x = y$
  holds because the locations of $\rho''$ are distinct.
  If $\rho_1 = \rho''$ and
  $\rho_2 \in \mathcal{E}$ then
  $\rho_1\ x = \rho_2\ y \Rightarrow x = y$
  holds because
  $\rho_1\ x \neq \rho_2\ y$
  (by freshness hypothesis). \qed
\end{proof}

\subsubsection{Proof of correctness} \label{sec:proof-correctness}

We finally recall and show Theorem~\ref{thm:correction-ll}
(p.~\pageref{thm:correction-ll}).
\setcounter{theorem}{2}
\begin{theorem}[Correctness of lambda-lifting]
If $x$ is a liftable parameter in $(M,\F,\rhot,\rho)$, then
\[\reduction{M}{s}{v}{s'} \text{ implies }
\reductionF{\lift{M}}{s}{v}{s'}{\lift{\F}}\]
\end{theorem}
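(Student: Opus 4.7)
The plan is to prove the theorem by induction on the height of the optimised reduction derivation of $\reduction{M}{s}{v}{s'}$, rather than by structural induction on $M$. Height induction is needed because the key case, a call to one of the lifted inner functions, requires rewriting the sub\hyp{}derivation using Lemmas~\ref{lem:switch-x}, \ref{lem:intro-in-store} and~\ref{lem:intro-in-env} before the induction hypothesis applies; these rewritings preserve the height of the derivation but not its structure. At each step I would dispatch on the last rule used and in each case verify that all six clauses of Definition~\ref{dfn:var-liftable} are preserved when passing to a premise, so that the induction hypothesis applies. Whenever a premise no longer mentions $x$ (in particular, Invariant~\ref{case:def} fails), I would observe that $\lift{\cdot}$ acts as the identity on that subterm and the relevant closures, so the conclusion is immediate without invoking the induction hypothesis.

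For the leaf rules (val), (var), (assign) and for the compositional rules (seq), (if-t.), (if-f.) the argument is largely mechanical, since $\lift{\cdot}$ is homomorphic on these constructors (notably $\lift{x}=x$ and $\lift{y\coloneqq a}=y\coloneqq\lift{a}$): the same rule applies to the lifted term with the same store transitions, and the liftability invariants transport straightforwardly to the premises. The (letrec) case is similar, but installing a new compact closure in $\F$ requires checking Invariant~\ref{case:share}; this is handled by the companion aliasing lemma for (letrec) alluded to in Section~\ref{sec:aliasing-lemmas}. In each case the tail/non-tail split of the environment either stays the same or shifts in the expected way, so that Invariants~\ref{case:pos}, \ref{case:loc} and~\ref{case:term} carry over by inspection of the rule's premises.

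The (call) case is the heart of the proof and splits on whether the callee $f$ is one of the lifted inner functions $h_i$. When $f\neq h_i$, the lifted call has the same arity and the same actuals as the original, and Lemma~\ref{lem:aliasing-call} provides the aliasing\hyp{}freeness needed to argue that $x$ is still liftable in the callee's body $b$; the induction hypothesis then fires on each premise. When $f=h_i$, the arity grows by one and the proof proceeds in four rewriting steps on the premise reducing $b$: (i) apply Lemma~\ref{lem:switch-x} to move the binding $(x,l)$ from the non\hyp{}tail into the tail environment, reshaping the premise into a form in which liftability holds of the sub\hyp{}reduction; (ii) apply the induction hypothesis to obtain a reduction of $\lift{b}$ with environment $\lift{\mathcal{F}'}$, which by Lemma~\ref{lem:Fstarclean} no longer mentions $x$; (iii) alpha\hyp{}rename the location $l$ to a fresh $l'$ so that the new formal parameter of $h_i$ binds to a fresh location, as the optimised (call) rule demands; and (iv) reintroduce the original location $l$ (holding the prior value of $x$) by applying Lemma~\ref{lem:intro-in-store} to the store and Lemma~\ref{lem:intro-in-env} to the environment. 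Reassembling these pieces yields a derivation of the (call) rule on $\lift{M}$ with environment $\lift{\F}$ and the appropriate $\gc{\rhot}{s'}$.

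The main obstacle is coordinating this sequence of rewritings in the $f=h_i$ subcase so that the final derivation literally matches the shape required by the optimised (call) rule, while simultaneously tracking that Invariants~\ref{case:share}, \ref{case:loc} and~\ref{case:term} survive each rewriting and that the freshness conditions on the $l_i$ are not broken by the alpha\hyp{}renaming. The auxiliary height\hyp{}preserving claims in Lemmas~\ref{lem:switch-x}--\ref{lem:intro-in-env} are precisely what keeps this orchestration compatible with induction on derivation height.
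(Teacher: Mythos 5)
Your proposal is correct and follows essentially the same route as the paper's proof: induction on the height of the optimised derivation, with the identity-lifting observation when $x$ is absent, the aliasing lemmas discharging Invariant~\ref{case:share} in (call) and (letrec), and the same four-step rewriting (Lemma~\ref{lem:switch-x}, induction hypothesis, alpha-renaming of $l$ to a fresh $l'$ justified via Lemma~\ref{lem:Fstarclean}, then Lemmas~\ref{lem:intro-in-store} and~\ref{lem:intro-in-env}) in the $f=h_i$ subcase. The only detail worth making explicit when writing it up is that after reassembling the (call) rule one must check $l\in\dom(\rhot)$ so that the reintroduced location is reclaimed by $\gc{\rhot}{\cdot}$, yielding exactly the store $\gc{\rhot}{s'}$ required by the statement.
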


Assume that $x$ is a liftable parameter in $(M,\F,\rhot,\rho)$. The proof is by
induction on the height of the reduction of $\reduction{M}{s}{v}{s'}$.  We only
show the case (call).  The full proof is available in the technical report
\cite{kerneis12}.

\paragraph{(call) --- first case}
    First, we consider the most interesting case where there exists $i$ such
    that $f = h_i$.
    The variable
    $x$ is a liftable parameter in $(h_i(\range{a}{1}{n}),\F,\rhot,\rho)$
    hence in $(a_i,\F,\varepsilon,\rhot\concat\rho)$ too.

    By the induction hypotheses, we get
    \[\reduclift[\env{}{\rhot\concat\rho}]{a_{i}}{s_{i}}{v_{i}}{s_{i+1}}.\]
    By the definition of lifting, $\lift{h_i(\range{a}{1}{n})} =
    h_i(\lift{a_{1}},\dotsc,\lift{a_{n}},x)$.  But $x$ is not a liftable
    parameter in $(b,\mathcal{F'},\rho'',\rho')$ since the
    Invariant~\ref{case:term} might be broken: $x \notin \dom(\rho'')$
    ($x$ is not a parameter of $h_i$) but $h_j$ might appear in tail
    position in $b$.

    On the other hand, we have $x \in \dom(\rho')$: since, by
    hypothesis, $x$ is a liftable parameter in
    $(h_i(\range{a}{1}{n}),\F,\rhot,\rho)$, it appears necessarily  in
    the environments of the closures of the $h_i$, such as $\rho'$.
    This allows us to split $\rho'$ into two parts:
    $\rho' = (x,l)\concat\rho'''$.
    It is then possible to move $(x,l)$ to the tail environment,
    according to Lemma~\ref{lem:switch-x}:
    \[\reductionF[\env{\rho''(x,l)}{\rho'''}]{b}{\ajout{s_{n+1}}{l_i}{v_i}}{v}{s'|_{\dom(s')\setminus\{l\}}}{\
    \ajout{\mathcal{F'}}{f}{\F\,f}}\]
    This rewriting ensures that
    $x$ is a liftable parameter in
    $(b,\ajout{\mathcal{F'}}{f}{\F\,f},\rho''\concat(x,l),\rho''')$ (by
    Lemma~\ref{lem:aliasing-call} for the Invariant~\ref{case:share}).

    By the induction hypotheses,
    \[\reductionF[\env{\rho''(x,l)}{\rho'''}]{\lift{b}}{\ajout{s_{n+1}}{l_i}{v_i}}{v}{s'|_{\dom(s')\setminus\{l\}}}{\
    \lift{\ajout{\mathcal{F'}}{f}{\F\,f}}}\]
    The $l$ location is not fresh: it must be rewritten into a fresh
    location, since $x$ is now a parameter of $h_i$.
    Let $l'$ be a location appearing in neither 
    $\lift{\ajout{\mathcal{F'}}{f}{\F\,f}}$, nor $\ajout{s_{n+1}}{l_i}{v_i}$ or $\rho''\concat\rhot'$.
    Then $l'$ is a fresh location, which is to act as $l$ in the
    reduction of $\lift{b}$.

    We will show that, after the reduction, $l'$ is not in the
    store (just like $l$ before the lambda-lifting).  In the meantime, the value
    associated to $l$ does not change (since $l'$ is modified instead
    of $l$).

    Lemma~\ref{lem:Fstarclean} implies that $x$  does not appear in the
    environments of \lift{\F}, so it does not appear in the environments
    of $\lift{\ajout{\mathcal{F'}}{f}{\F\,f}}\subset\lift{\F}$ either.
    As a consequence, lack of aliasing implies by Definition~\ref{dfn:aliasing}
    that the label $l$, associated to $x$, does not appear in
    $\lift{\ajout{\mathcal{F'}}{f}{\F\,f}}$ either, so
    \[\lift{\ajout{\mathcal{F'}}{f}{\F\,f}}[l'/l] = \lift{\ajout{\mathcal{F'}}{f}{\F\,f}}.\]
    Moreover, $l$ does not appear in $s'|_{\dom(s')\setminus\{l\}}$.
    Since $l'$ does not appear in the store or the
    environments of the reduction, we rename $l$ to $l'$:
    \[\reductionF[\env{\rho''(x,l')}{\rho'''}]{\lift{b}}{\ajout{s_{n+1}[l'/l]}{l_i}{v_i}}{v}{s'|_{\dom(s')\setminus\{l\}}}{\
    \lift{\ajout{\mathcal{F'}}{f}{\F\,f}}}.\]
    We want now to reintroduce $l$.  
    Let $v_x = s_{n+1}\ l$.  The location $l$  does not appear in
    $\ajout{s_{n+1}[l'/l]}{l_i}{v_i}$, $\lift{\ajout{\mathcal{F'}}{f}{\F\,f}}$, or $\rho''(x,l')\concat\rho'''$.
    Thus, by Lemma~\ref{lem:intro-in-store},
    \[\reductionF[\env{\rho''(x,l')}{\rho'''}]{\lift{b}}{\ajout{\ajout{s_{n+1}[l'/l]}{l_i}{v_i}}{l}{v_x}}{v}{\ajout{s'|_{\dom(s')\setminus\{l\}}}{l}{v_x}}{\
    \lift{\ajout{\mathcal{F'}}{f}{\F\,f}}}.\]
    Since
    \begin{align*}
      \ajout{\ajout{s_{n+1}[l'/l]}{l_i}{v_i}}{l}{v_x}
      &= \ajout{\ajout{s_{n+1}[l'/l]}{l}{v_x}}{l_i}{v_i}
      && \text{because $\forall i, l \neq l_i$}\\
      &= \ajout{\ajout{s_{n+1}}{l'}{v_x}}{l_i}{v_i}
      && \text{because $v_x = s_{n+1} l$}\\
      &= \ajout{\ajout{s_{n+1}}{l_i}{v_i}}{l'}{v_x}
      && \text{because $\forall i, l' \neq l_i$}
    \end{align*}
    and
    $\ajout{s'|_{\dom(s')\setminus\{l\}}}{l}{v_x} = \ajout{s'}{l}{v_x}$,
    we finish the rewriting by Lemma~\ref{lem:intro-in-env},
    \[\reductionF[\env{\rho''(x,l')}{(x,l)\concat\rho'''}]{\lift{b}}{\ajout{\ajout{s_{n+1}}{l_i}{v_i}}{l'}{v_x}}{v}{\
    \ajout{s'}{l}{v_x}}{\lift{\ajout{\mathcal{F'}}{f}{\F\,f}}}.\]
    Hence the result:
    \[\inferrule*[Left=(call)]{\
    \lift{\F}\ h_i = \fun{\range{x}{1}{n}x}{\lift{b}}{\rho',\lift{\mathcal{F'}}}\\
    \rho''= \drange{x}{l}{1}{n}(x,\rhot\ x)\\
    \text{$l'$ and $l_{i}$ fresh and distinct}\\\\
    \forall i,\reduclift[\env{}{\rhot\concat\rho}]{a_{i}}{s_{i}}{v_{i}}{s_{i+1}} \\
    \reduclift[\env{}{\rhot\concat\rho}]{x}{s_{n+1}}{v_x}{s_{n+1}} \\
    \reductionF[\env{\rho''(x,l')}{\rho'}]{\lift{b}}{\ajout{\ajout{s_{n+1}}{l_i}{v_i}}{l'}{v_x}}{v}{\
    \ajout{s'}{l}{v_x}}{\lift{\ajout{\mathcal{F'}}{f}{\F\,f}}}
    }{\
    \reduclift{h_i(\range{a}{1}{n})}{s_{1}}{v}{\gc{\rhot}{\ajout{s'}{l}{v_x}}}}\]
    Since $l \in \dom(\rhot)$ (because $x$ is a liftable parameter in $(h_i(\range{a}{1}{n}),\F,\rhot,\rho)$),
    the extraneous location is reclaimed as expected:
    $\gc{\rhot}{\ajout{s'}{l}{v_x}} = \gc{\rhot}{s'}$.

\paragraph{(call) --- second case}
    We now consider the case where $f$ is not one of the $h_i$.
    The variable
    $x$ is a liftable parameter in $(f(\range{a}{1}{n}),\F,\rhot,\rho)$
    hence in
    $(a_i,\F,\varepsilon,\rhot\concat\rho)$ too.

    By the induction hypotheses, we get
    \[\reduclift[\env{}{\rhot\concat\rho}]{a_{i}}{s_{i}}{v_{i}}{s_{i+1}},\]
    and, by Definition~\ref{dfn:lifted-term},
    \[\lift{f(\range{a}{1}{n})} = f(\lift{a_{1}},\dotsc,\lift{a_{n}}).\]
    If $x$ is not defined in $b$ or $\F$, then $\lift{}$ is the identity
    function and can trivially be applied to the reduction of $b$.  Otherwise,
    $x$ is a liftable parameter in
    $(b,\ajout{\mathcal{F'}}{f}{\F\,f},\rho'',\rho')$ --- when checking the
    invariants of Definition~\ref{dfn:var-liftable}, we use
    Lemma~\ref{lem:aliasing-call} for the Invariant~\ref{case:share} and
    check separately $f = g$ and $f \neq g$ for the Invariants~\ref{case:loc}
    and~\ref{case:term} (see the technical report \cite{kerneis12} for more
    details).

    By the induction hypotheses,
    \[\reductionF[\env{\rho''}{\rho'}]{\lift{b}}{\ajout{s_{n+1}}{l_{i}}{v_{i}}}{v}{s'}{\lift{\ajout{\mathcal{F'}}{f}{\F\,f}}}\]
    hence:
    \[\inferrule*[Left=(call)]{\
    \lift{\F}\ f = \fun{\range{x}{1}{n}}{\lift{b}}{\rho',\lift{\mathcal{F'}}}\\
    \rho''= \drange{x}{l}{1}{n}\\
    \text{$l_{i}$ fresh and distinct}\\\\
    \forall i,\reduclift[\env{}{\rhot\concat\rho}]{a_{i}}{s_{i}}{v_{i}}{s_{i+1}} \\
    \reductionF[\env{\rho''}{\rho'}]{\lift{b}}{\ajout{s_{n+1}}{l_{i}}{v_{i}}}{v}{s'}{\lift{\ajout{\mathcal{F'}}{f}{\F\,f}}}
    }{\
    \reduclift{f(\range{a}{1}{n})}{s_{1}}{v}{\gc{\rhot}{s'}}}\]

\paragraph{Other cases}
The detailed proof of the other cases is available in the technical report
\cite{kerneis12}.  They are mostly straightforward by induction, except (letrec)
which requires an additional aliasing lemma to check the liftability invariants.
\qed

\section{Conclusions and further work}
\label{sec:conclusion}

In this paper, we have described CPC, a programming language that
provides threads which are implemented, in different parts of the
program, either as extremely lightweight heap-allocated data structures,
or as native operating system threads.  The compilation technique used
by CPC is somewhat unusual, since it involves a continuation-passing
style (CPS) transform for the C programming language.  We have shown the
correctness of that particular instance of the CPS transform, as well as
the correctness of CPC's compilation scheme; while other efficient
systems for programming concurrent systems exist, we claim that the
existence of a proof of correctness makes CPC unique among them.

CPC is highly adapted to writing high-performance network servers.  To
prove this fact, we have written Hekate, a large scale BitTorrent seeder
in CPC.  Hekate has turned out to be a compact, maintainable, fast and
reliable piece of software.

We enjoyed writing Hekate very much.  Due to the lightweight threads that
it provides, and due to the determinacy of scheduling of attached threads,
CPC threads have a very different feel from threads in other programming
languages; discovering the right idioms and the right abstractions for CPC
has been (and remains) one of the most enjoyable parts of our work.

For CPC to become a useful production language it must come equipped
with a consistent and powerful standard library that encapsulates useful
programming idioms in a generally usable form.  The current CPC library
has been written on an on-demand basis, mainly to meet the needs of
Hekate; the choice of functions that it provides is therefore somewhat
random.  Filling in the holes of the library should be a fairly
straightforward job.  For CPC to scale easily to multiple cores, this
extended standard library should also offer the ability to run several
event loops, scheduled on different cores, and migrate threads between
them.

We have no doubt that CPC can be useful for applications other than
high-performance network servers.  One could for example envision a GUI
system where every button is implemented using three CPC threads: one that
waits for mouse clicks, one that draws the button, and one that coordinates
with the rest of the system.  To be useful in practice, such a system
should be implemented using a standard widget library; the fact that CPC
integrates well with external event loops indicates that this should be
possible.

Finally, the ideas used in CPC might be applicable to programming languages
other than C.  For example, a continuation passing transform might
be a way of extending Javascript with threads without the need to deploy
a new Javascript runtime to hundreds of millions of web browsers.

\section*{Software availability}

The full CPC translator, including sources and benchmarking code, is
available online at
\url{http://www.pps.univ-paris-diderot.fr/~kerneis/software/cpc/}.\\
The sources of Hekate are available online at
\url{http://www.pps.univ-paris-diderot.fr/~kerneis/software/hekate/}.

\begin{acknowledgements}
The authors would like to thank the anonymous reviewers for their valuable
comments and suggestions to improve this article.
\end{acknowledgements}

\begin{thebibliography}{10}
\providecommand{\url}[1]{{#1}}
\providecommand{\urlprefix}{URL }
\expandafter\ifx\csname urlstyle\endcsname\relax
  \providecommand{\doi}[1]{DOI~\discretionary{}{}{}#1}\else
  \providecommand{\doi}{DOI~\discretionary{}{}{}\begingroup
  \urlstyle{rm}\Url}\fi

\bibitem{adya}
Adya, A., Howell, J., Theimer, M., Bolosky, W.J., Douceur, J.R.: Cooperative
  task management without manual stack management.
\newblock In: Proceedings of the 2002 {USENIX} Annual Technical Conference, pp.
  289--302. {USENIX} Association, Berkeley, CA, USA (2002)

\bibitem{appel}
Appel, A.W.: Compiling with continuations.
\newblock Cambridge University Press (1992)

\bibitem{attar-canal}
Attar, P., Canal, Y.: R\'ealisation d'un seeder bittorrent en {CPC} (2009).
\newblock
  \urlprefix\url{http://www.pps.univ-paris-diderot.fr/~jch/software/hekate/hekate-attar-c%
anal.pdf}

\bibitem{behren}
von Behren, R., Condit, J., Zhou, F., Necula, G.C., Brewer, E.: Capriccio:
  scalable threads for internet services.
\newblock {SIGOPS} Oper. Syst. Rev. \textbf{37}(5), 268--281 (2003)

\bibitem{berdine}
Berdine, J., {O'Hearn}, P., Reddy, U., Thielecke, H.: Linear
  {Continuation-Passing}.
\newblock {Higher-Order} and Symbolic Computation \textbf{15}, 181--208 (2002)

\bibitem{boussinot}
Boussinot, F.: {FairThreads:} mixing cooperative and preemptive threads in {C}.
\newblock Concurrency and Computation: Practice and Experience \textbf{18}(5),
  445--469 (2006)

\bibitem{bruggeman}
Bruggeman, C., Waddell, O., Dybvig, R.K.: Representing control in the presence
  of one-shot continuations.
\newblock In: Proceedings of the ACM SIGPLAN 1996 conference on Programming
  language design and implementation, PLDI '96, pp. 99--107. ACM, New York, NY,
  USA (1996)

\bibitem{chroboczek}
Chroboczek, J.: {C}ontinuation-passing for {C}: a space-efficient
  implementation of concurrency.
\newblock Tech. rep., {PPS,} Universit\'e Paris 7 (2005).
\newblock \urlprefix\url{http://hal.archives-ouvertes.fr/hal-00135160/}

\bibitem{claessen}
Claessen, K.: A poor man's concurrency monad.
\newblock J. Funct. Program. \textbf{9}(3), 313--323 (1999)

\bibitem{clinger}
Clinger, W.D.: Proper tail recursion and space efficiency.
\newblock In: Proceedings of the {ACM} {SIGPLAN} 1998 conference on Programming
  language design and implementation, PLDI '98, pp. 174--185. {ACM}, New York,
  NY, USA (1998)

\bibitem{danvy}
Danvy, O., Schultz, U.: Lambda-lifting in quadratic time.
\newblock In: Functional and Logic Programming, \emph{Lecture Notes in Computer
  Science}, vol. 2441, pp. 134--151. Springer-Verlag, Berlin, Germany (2002)

\bibitem{nptl}
Drepper, U., Molnar, I.: The {N}ative {POSIX} {T}hread {L}ibrary for {L}inux
  (2005).
\newblock \urlprefix\url{http://people.redhat.com/drepper/nptl-design.pdf}

\bibitem{duff}
Duff, T.: Duff's device (1983).
\newblock \urlprefix\url{http://www.lysator.liu.se/c/duffs-device.html}.
\newblock Electronic mail to {R}. {G}omes, {D}. {M}. {R}itchie and {R}. {P}ike

\bibitem{protothreads}
Dunkels, A., Schmidt, O., Voigt, T., Ali, M.: Protothreads: simplifying
  event-driven programming of memory-constrained embedded systems.
\newblock In: Proceedings of the 4th international conference on Embedded
  networked sensor systems, SenSys '06, pp. 29--42. ACM, New York, NY, USA
  (2006)

\bibitem{dybvig}
Dybvig, R.K., Hieb, R.: Engines from continuations.
\newblock Comput. Lang. \textbf{14}, 109--123 (1989)

\bibitem{engelschall}
Engelschall, R.S.: Portable multithreading: the signal stack trick for
  user-space thread creation.
\newblock In: Proceedings of the 2000 {USENIX} Annual Technical Conference.
  {USENIX} Association, Berkeley, CA, USA (2000)

\bibitem{fischbach}
Fischbach, A., Hannan, J.: Specification and correctness of lambda lifting.
\newblock J. Funct. Program. \textbf{13}, 509--543 (2003)

\bibitem{fischer}
Fischer, J., Majumdar, R., Millstein, T.: Tasks: language support for
  event-driven programming.
\newblock In: Proceedings of the 2007 ACM SIGPLAN symposium on Partial
  evaluation and semantics-based program manipulation, PEPM '07, pp. 134--143.
  ACM, New York, NY, USA (2007)

\bibitem{ganz}
Ganz, S.E., Friedman, D.P., Wand, M.: Trampolined style.
\newblock In: Proceedings of the fourth ACM SIGPLAN international conference on
  Functional programming, ICFP '99, pp. 18--27. ACM, New York, NY, USA (1999)

\bibitem{haller}
Haller, P., Odersky, M.: Scala actors: Unifying thread-based and event-based
  programming.
\newblock Theoretical Computer Science \textbf{410}(2--3), 202--220 (2009)

\bibitem{harris}
Harris, T., Abadi, M., Isaacs, R., McIlroy, R.: {AC}: composable asynchronous
  {IO} for native languages.
\newblock In: Proceedings of the 2011 ACM international conference on Object
  oriented programming systems languages and applications, OOPSLA '11, pp.
  903--920. ACM, New York, NY, USA (2011)

\bibitem{haynes}
Haynes, C.T., Friedman, D.P., Wand, M.: Continuations and coroutines.
\newblock In: Proceedings of the 1984 {ACM} Symposium on {LISP} and functional
  programming, LFP '84, pp. 293--298. {ACM}, New York, NY, USA (1984)

\bibitem{hoare}
Hoare, C.A.R.: Monitors: an operating system structuring concept.
\newblock Commun. {ACM} \textbf{17}(10), 549--557 (1974)

\bibitem{iso9899}
{International Organization for Standardization}: {ISO/IEC} 9899:1999
  ``{P}rogramming {L}anguages -- {C}'' (1999)

\bibitem{johnsson}
Johnsson, T.: Lambda lifting: Transforming programs to recursive equations.
\newblock In: Functional Programming Languages and Computer Architecture,
  \emph{Lecture Notes in Computer Science}, vol. 201, pp. 190--203.
  Springer-Verlag, Berlin, Germany (1985)

\bibitem{kerneis}
Kerneis, G., Chroboczek, J.: Are events fast?
\newblock Tech. rep., {PPS,} Universit\'e Paris 7 (2009).
\newblock \urlprefix\url{http://hal.archives-ouvertes.fr/hal-00434374/}

\bibitem{places2011}
Kerneis, G., Chroboczek, J.: {CPC}: programming with a massive number of
  lightweight threads.
\newblock In: Proceedings of the Workshop on Programming Language Approaches to
  Concurrency and Communication-Centric Software, PLACES '11 (2011)

\bibitem{kerneis12}
Kerneis, G., Chroboczek, J.: Lambda-lifting and {CPS} conversion in an imperative
  language.
\newblock Tech. rep., {PPS,} Universit\'e Paris 7 (2012).
\newblock \urlprefix\url{http://hal.archives-ouvertes.fr/hal-00669849/}

\bibitem{weave}
Key, A.: Weave: translated threaded source (with annotations) to fibers with
  context passing (ca. 1995--2000).
\newblock As used within RAID-1 code in IBM SSA RAID adapters. Personal
  communication

\bibitem{krohn}
Krohn, M., Kohler, E., Kaashoek, M.F.: Events can make sense.
\newblock In: Proceedings of the 2007 {USENIX} Annual Technical Conference, pp.
  1--14. {USENIX} Association, Berkeley, CA, USA (2007)

\bibitem{landin}
Landin, P.J.: Correspondence between {ALGOL} 60 and {C}hurch's lambda-notation:
  part {I}.
\newblock Commun. ACM \textbf{8}, 89--101 (1965)

\bibitem{ocaml}
Leroy, X., Doligez, D., Frisch, A., Garrigue, J., R\'emy, D., Vouillon, J.: The
  {Objective-Caml} system (2010).
\newblock \urlprefix\url{http://caml.inria.fr/}

\bibitem{li}
Li, P., Zdancewic, S.: Combining events and threads for scalable network
  services implementation and evaluation of monadic, application-level
  concurrency primitives.
\newblock In: Proceedings of the 2007 ACM SIGPLAN conference on Programming
  language design and implementation, PLDI '07, pp. 189--199. ACM, New York,
  NY, USA (2007)

\bibitem{miller}
Miller, J.S., Rozas, G.J.: Garbage collection is fast, but a stack is faster.
\newblock AI Memo 1462, Artificial Intelligence Laboratory, Massachusetts
  Institute of Technology, Cambridge, MA, USA (1994)

\bibitem{cil}
Necula, G., McPeak, S., Rahul, S., Weimer, W.: {CIL}: Intermediate language and
  tools for analysis and transformation of {C} programs.
\newblock In: Compiler Construction, \emph{Lecture Notes in Computer Science},
  vol. 2304, pp. 209--265. Springer-Verlag, Berlin, Germany (2002)

\bibitem{pai}
Pai, V.S., Druschel, P., Zwaenepoel, W.: Flash: an efficient and portable web
  server.
\newblock In: Proceedings of the 1999 {USENIX} Annual Technical Conference.
  {USENIX} Association, Berkeley, CA, USA (1999)

\bibitem{plotkin}
Plotkin, G.D.: Call-by-name, call-by-value and the lambda-calculus.
\newblock Theoretical Computer Science \textbf{1}(2), 125--159 (1975)

\bibitem{reppy}
Reppy, J.: Concurrent {ML}: Design, application and semantics.
\newblock In: Functional Programming, Concurrency, Simulation and Automated
  Reasoning, \emph{Lecture Notes in Computer Science}, vol. 693, pp. 165--198.
  Springer-Verlag, Berlin, Germany (1993)

\bibitem{reynolds}
Reynolds, J.C.: The discoveries of continuations.
\newblock {LISP} and Symbolic Computation \textbf{6}(3), 233--247 (1993)

\bibitem{rompf}
Rompf, T., Maier, I., Odersky, M.: Implementing first-class polymorphic
  delimited continuations by a type-directed selective cps-transform.
\newblock In: Proceedings of the 14th ACM SIGPLAN international conference on
  Functional programming, ICFP '09, pp. 317--328. ACM, New York, NY, USA (2009)

\bibitem{scholz}
Scholz, E.: A concurrency monad based on constructor primitives, or, being
  first-class is not enough.
\newblock Tech. Rep. B 95-01, Fachbereich Mathematik und Informatik, Freie
  Universit\"at Berlin, Berlin, Germany (1995)

\bibitem{shekhtman}
Shekhtman, G., Abbott, M.: {S}tate {T}hreads for internet applications (2009).
\newblock \\\urlprefix\url{http://state-threads.sourceforge.net/docs/st.html}

\bibitem{srinivasan}
Srinivasan, S., Mycroft, A.: Kilim: Isolation-typed actors for java.
\newblock In: ECOOP 2008 -- Object-Oriented Programming, \emph{Lecture Notes in
  Computer Science}, vol. 5142, pp. 104--128. Springer (2008)

\bibitem{steele}
{Steele Jr.}, G.L.: Rabbit: A compiler for {S}cheme.
\newblock Master's thesis, Artificial Intelligence Laboratory, Massachusetts
  Institute of Technology, Cambridge, MA, USA (1978).
\newblock Technical report AI-TR-474

\bibitem{Ste76}
{Steele Jr.}, G.L., Sussman, G.J.: Lambda, the ultimate imperative.
\newblock AI Memo 353, Artificial Intelligence Laboratory, Massachusetts
  Institute of Technology, Cambridge, MA, USA (1976)

\bibitem{strachey}
Strachey, C., Wadsworth, C.P.: Continuations: A mathematical semantics for
  handling full jumps.
\newblock Technical Monograph PRG-11, Oxford University Computing Laboratory,
  Programming Research Group, Oxford, England (1974).
\newblock Reprinted in Higher-Order and Symbolic Computation 13(1/2):135--152,
  2000, with a foreword~\cite{Wadsworth:HOSC00}

\bibitem{tatham}
Tatham, S.: Coroutines in {C} (2000).
\newblock
  \urlprefix\url{http://www.chiark.greenend.org.uk/~sgtatham/coroutines.html}

\bibitem{thielecke}
Thielecke, H.: Continuations, functions and jumps.
\newblock {SIGACT} News \textbf{30}(2), 33--42 (1999)

\bibitem{tismer}
Tismer, C.: Continuations and stackless {P}ython.
\newblock In: Proceedings of the 8th International Python Conference (2000)

\bibitem{vouillon}
Vouillon, J.: Lwt: a cooperative thread library.
\newblock In: Proceedings of the 2008 {ACM} {SIGPLAN} workshop on {ML}, ML '08,
  pp. 3--12. {ACM}, New York, NY, USA (2008)

\bibitem{Wadsworth:HOSC00}
Wadsworth, C.P.: Continuations revisited.
\newblock Higher-Order and Symbolic Computation \textbf{13}(1/2), 131--133
  (2000)

\bibitem{wand}
Wand, M.: Continuation-based multiprocessing.
\newblock In: Proceedings of the 1980 ACM conference on LISP and functional
  programming, LFP '80, pp. 19--28. ACM, New York, NY, USA (1980)

\bibitem{welsh}
Welsh, M., Culler, D., Brewer, E.: {SEDA:} an architecture for
  well-conditioned, scalable internet services.
\newblock {SIGOPS} Oper. Syst. Rev. \textbf{35}(5), 230--243 (2001)

\bibitem{wijngaarden}
van Wijngaarden, A.: Recursive definition of syntax and semantics.
\newblock In: Formal Language Description Languages for Computer Programming,
  pp. 13--24. {North-Holland} Publishing Company, Amsterdam, Netherlands (1966)

\end{thebibliography}

\end{document}